\newtheorem{theorem}{Theorem}
\newtheorem{corollary}{Corollary}
\newtheorem{proposition}{Proposition}
\newtheorem{lemma}{Lemma}
\theoremstyle{definition}
\newtheorem{definition}{Definition}
\renewcommand{\qed}{\hfill\IEEEQED}
\newcommand{\Conline}{\Theta}
\newcommand{\T}{\mathscr{T}}
\begin{document}
\allowdisplaybreaks

\title{Universally Near Optimal Online Power Control for Energy Harvesting Nodes}

\author{Dor Shaviv and Ayfer \"Ozg\"ur
\thanks{The authors are with the Department of Electrical Engineering, Stanford University.}}

\maketitle

\begin{abstract}
We consider online power control for an energy harvesting system with random i.i.d. energy arrivals and a finite size battery. We propose a simple online power control policy for this channel that requires minimal information regarding the distribution of the energy arrivals and prove that it is universally near-optimal for all parameter values. In particular, the policy depends on the distribution of the energy arrival process only through its mean and it achieves the optimal long-term average throughput of the channel within both constant additive and multiplicative gaps. Existing heuristics for online power control fail to achieve such universal performance. This result also allows us to approximate the long-term average throughput of the system with a simple formula, which sheds some light on the qualitative behavior of the throughput, namely how it depends on the distribution of the energy arrivals and the size of the battery.

\end{abstract}

\IEEEpeerreviewmaketitle

\section{Introduction}

Recent advances in energy harvesting technologies enable
wireless devices to harvest the energy they need for communication from the natural
resources in their environment. This development opens
the exciting possibility to build wireless networks that are
self-powered, self-sustainable and which have lifetimes limited
by their hardware and not the size of their batteries.

Communication with such wireless devices requires the design of good power control policies that can maximize throughput under random energy availability. In particular, available energy should not be consumed
too fast, or transmission can be interrupted in the future due to an
energy outage; on the other hand, if the energy consumption
is too slow, it can result in the wasting of the harvested
energy and missed recharging opportunities in the future due
to an overflow in the battery capacity.

The problem of power control for energy harvesting communication has received significant interest in the recent literature~\cite{YangUlukus2012,TutuncuogluYener2012,Ozeletal2011,
DP0,DP1,DP2,DP3,ho2012optimal,MaoCheungWong2012,Wang15,Kazerouni2015,Mitran,
online_infiniteB1,online_infiniteB2,online_infiniteB3,online_infiniteB4,
info2013,amirnavaei2015online,xu2014throughput}.
In the offline case, when future energy arrivals are known ahead of time, the problem has an explicit solution~\cite{YangUlukus2012,Ozeletal2011,TutuncuogluYener2012}. The optimal policy keeps energy consumption as constant as possible over time while ensuring no energy wasting due to an overflow in the battery capacity. The more interesting case is the online scenario where future energy arrivals are random and unknown. In this case, the problem can be modeled as a Markov Decision Process (MDP) and solved numerically using dynamic programming~\cite{DP0,DP1,DP2,DP3,ho2012optimal,MaoCheungWong2012,Wang15}. However, this numerical approach has several shortcomings. First, although there exist numerical methods to find a solution which is arbitrarily close to optimal, such
as value iteration and policy iteration, these methods rely
on quantization of the state space and the action space.
Specifically, the computational load of each iteration grows
as the cube of the number of quantized states and/or actions and may not be suitable for sensor nodes with limited computational capabilities.
Second, the solution depends on the exact model for the statistical distribution of the
energy arrival process; this may be hard to obtain in practical
scenarios and may require recomputing the dynamic programming solution periodically to track changes in the harvesting process. Finally, the numerical solution does not provide much insight on the structure of the optimal online power control policy and the
qualitative behavior of the resultant throughput, namely how it
varies with the parameters of the problem. This kind of insight
can be critical for design considerations, such as choosing the
size of the battery to employ at the transmitter.  

Due to these limitations of the numerical approach, there has been significant effort in the recent literature to develop simple heuristic online policies. These policies come either with no guarantees or only asymptotic guarantees on optimality~\cite{Mitran,
online_infiniteB1,online_infiniteB2,online_infiniteB3,
online_infiniteB4,info2013,amirnavaei2015online}. For example, two natural heuristic policies, the variations of which have been widely studied in the literature, are the greedy policy and what we will refer to as the constant policy. In the greedy policy, the transmitter instantaneously uses all the harvested energy in each time slot. This greedy approach ensures that energy is never wasted due to an overflow in the battery capacity and it naturally becomes optimal in the limit when the harvested energy (or SNR) goes to zero, since in the low-SNR limit the capacity becomes proportional to the total energy transferred to the receiver.
The constant policy on the other hand aims to keep power allocation as constant as possible over time, for example by always allocating energy equal to the mean $\mu$ of the energy harvesting process as long as there is sufficient energy in the battery.
It is easy to see that this policy becomes optimal in the limit when the battery size goes to infinity, since an infinite battery can allow to average out the randomness in the harvesting process and therefore allocate energy equal to $\mu$ in almost all time-slots. However, such asymptotic results do not provide any insights on the gap to optimality for given finite parameter values. In particular, given a certain distribution of the energy harvesting process and a given finite size of the battery, these two policies can be arbitrarily away from optimality and it is not even clear which one of these two structurally very different policies would be a better choice for the given problem.

The goal of this paper is to address this problem. Instead of seeking policies that become asymptotically optimal in one limit or the other, we look for policies that are provably close to optimal across all parameter regimes and any distribution of the energy arrivals. In particular, we seek policies that achieve the optimal long-term average throughput of the system simultaneously within a constant multiplicative factor and a constant additive gap for all parameter values and distributions of the energy arrivals. This more stringent requirement ensures that these policies are universally near-optimal in the sense that their gap to optimality remains bounded across all parameter ranges. It would be moreover desirable for these policies to have minimal dependence on the distribution of the energy arrivals, for example depend only on the mean energy arrival rate. This would enable one to apply them to  any given problem with arbitrary parameter values, without even knowing the exact distribution of the energy arrivals, while she/he would be assured to achieve a performance that is very close to the one achieved by an optimal policy specifically optimized for the given problem, in particular the exact distribution of the energy arrivals. While the two policies discussed above can be applied with minimal knowledge of the energy arrival distribution (the greedy policy does not require any information about the distribution while the constant policy is based only on the knowledge of the mean), it is easy to show that neither of them is universally near-optimal; there are parameter regimes and distributions for which their gap to optimality grows unboundedly.  

The main result of this paper is to propose a simple novel online power control policy which depends on the harvesting process only through its mean: at each time-slot, the policy uses a constant fraction of the energy available in the battery where the fraction is chosen as the ratio of the mean of the energy arrival distribution and the battery size. This policy is structurally very different from the two policies discussed above and may appear a priori counter-intuitive. We show that it is naturally motivated by the case where the energy arrival process is i.i.d. Bernoulli, in which case the optimal online power control policy can be explicitly characterized. We then establish the near-optimality of this policy for any i.i.d. harvesting process and any size of the battery. In particular, we show that this policy  achieves the optimal long-term average throughput of the system simultaneously within a constant multiplicative factor and a constant additive gap for all parameter values. This implies that this policy can be applied under any i.i.d. harvesting process, without even knowing the statistical distribution of the energy arrivals. The multiplicative and additive approximations guarantee that it will perform close to the best strategy optimized for the exact distribution of the energy arrivals across all parameter values (both in the high- and the low-SNR regimes). The main ingredient of our proof is to show that for the proposed policy Bernoulli is the worst case distribution for the energy arrivals. Therefore, the performance of the scheme under a Bernoulli distribution provides a lower bound on its performance under any i.i.d. process. In this sense,  our policy can be thought of as building on the insights from the worst-case scenario, hence performs well in the worst-case sense, while previous heuristics can be thought of as building on the insights from the best case scenario, i.e. when energy arrivals are constant equal to $\mu$.

This result also leads to  a simple approximation of the optimal long-term average throughput of an AWGN energy harvesting channel. In particular, we show that within a constant gap, the average throughput is given by
\[
\Theta\approx\frac{1}{2}\log\left(1+\mathbb{E}[\min\{E_t,\bar{B}\}]\right),
\]
where $E_t$ denotes the energy arrival process and $\bar{B}$ is the battery capacity. This shows that a battery large enough to capture the maximal energy arrival over a single time-slot is sufficient to approximately achieve the AWGN capacity.

\section{System Model}
\label{sec:system_model}

We begin by introducing the notation used throughout the paper.
Let $\mathbb{E}[\,\cdot\,]$ denote expectation.
For $m\leq n$, denote $X_m^n=(X_m,X_{m+1},\ldots,X_{n-1},X_n)$ and $X^n=X_1^n$.
All logarithms are to base~2, and $\ln$ will denote log to base $e$.

We consider a point-to-point single user channel with additive white Gaussian noise (AWGN). We assume a quasi-static fading channel, in which the channel coefficient remains constant throughout the entire transmission time.
The system operation is slotted, i.e. time is discrete ($t=1,2,\ldots$).
At time $t$, the received signal is $y_t=\sqrt{\gamma}x_t+z_t$, where $x_t$ is the transmitted signal, $\gamma$ is the channel coefficient or SNR, and $z_t$ is unit-variance zero-mean white Gaussian noise.

The transmitter is equipped with a battery of finite capacity $\bar{B}$.
Let $E_t$ be energy harvested at time $t$, which is assumed to be a non-negative i.i.d. process with $\mathbb{E}[E_t]>0$.
We assume the energy arrival process is known causally at the transmitter.
A power control \emph{policy} for an energy harvesting system is a sequence of mappings from energy arrivals to a non-negative number, which will denote a level of instantaneous power.
In this work, we will focus on \emph{online policies}.
An {online policy} $\mathbf{g}=\{g_t\}_{t=1}^{\infty}$ is a sequence of mappings 
\begin{equation}\label{eq:online}
g_t:\mathcal{E}^t\to\mathbb{R}_+\qquad
,t=1,2,\ldots.
\end{equation}
By allocating power $g_t$ at time $t$, the resultant instantaneous rate is $r_t=\frac{1}{2}\log(1+\gamma g_t)$.

Let $b_t$ be the amount of energy available in the battery at the beginning of time slot $t$. 
An \emph{admissible policy} $\mathbf{g}$ is such that satisfies the following constraints for every possible harvesting sequence $\{E_t\}_{t=1}^{\infty}$:
\begin{align}
&0\leq g_t\leq b_t&&,t=1,2,\ldots,\label{eq:power}\\
&b_t=\min\{b_{t-1}-g_{t-1}+E_t,\bar{B}\}&&,t=2,3,\ldots,\label{eq:battery}
\end{align}
where we assume $b_1=\bar{B}$ without loss of generality.

For a given policy $\mathbf{g}$, we define the $n$-horizon expected throughput to be:
\begin{equation}
\T_n(\mathbf{g})=\frac{1}{n}\mathbb{E}\left[
\sum_{t=1}^{n}\frac{1}{2}\log(1+\gamma g_t(E^t))\right],
\label{eq:def_throughput}
\end{equation}
where the expectation is over the energy arrivals $E_1,\ldots,E_n$.
Finally, our goal is to characterize the optimal online power control policy and the resultant long-term average throughput:
\begin{align}
\Conline
&=\sup_{\mathbf{g}\text{ admissible}}
\liminf_{n\to\infty}\T_n(\mathbf{g}).
\label{eq:online_opt}
\end{align}

\section{Preliminary Discussion}
\label{sec:preliminary_discussion}

While the optimal offline power control policy has been explicitly characterized in~\cite{YangUlukus2012,Ozeletal2011,TutuncuogluYener2012}, in which the energy arrival sequence $E^n$ is assumed to be known ahead of time, there is limited understanding regarding the structure of the optimal online power control policy and the resultant long-term average throughput. 

It is easily observed that this is an MDP, with the state being the battery level $b_t$, the action $g_t$ allowed to take values in the interval $[0,b_t]$, and the disturbance $E_{t+1}$. The state dynamics are given by
\begin{equation}
b_{t+1}=f(b_t,g_t,E_{t+1})\triangleq\min\{b_t-g_t+E_{t+1},\bar{B}\},
\label{eq:DP_state_dynamics}
\end{equation}
and the stage reward is $r_t=\frac{1}{2}\log(1+\gamma g_t)$.
It then follows by a well-known result in MDPs~\cite[Theorem~4.4.2]{Puterman2005} that the optimal policy is \emph{Markovian}, i.e. it depends only on the current state: $g_t^\star(E^t)=g_t^\star(b_t)$.
If the policy depends only on the current state and it is time-invariant, i.e. $g_t^\star(E^t)=g^\star(b_t)$, we say it is \emph{stationary}.
The optimal policy can be found by means of dynamic programming, which involves solving the Bellman equation:
\begin{proposition}[{Bellman Equation~\cite[Theorem~6.1]{arapostathis1993discrete}}]
If there exists a scalar $\lambda\in\mathbb{R}_+$ and a bounded function $h:[0,\bar{B}]\to\mathbb{R}_+$ that satisfy
\begin{align}
&\lambda+h(b)=\sup_{0\leq g\leq b}\big\{
\tfrac{1}{2}\log(1+\gamma g)\nonumber\\*
&\hspace{10em}
+\mathbb{E}
[h(\min\{b-g+E_t,\bar{B}\})]
\big\}
\label{eq:Bellman}
\end{align}
for all $0\leq b\leq\bar{B}$,
 then the optimal throughput is $\Conline=\lambda$.
Furthermore, if $g^\star(b)$ attains the supremum in~\eqref{eq:Bellman} then the optimal policy is stationary and is given by $g_t^\star(E^t)=g^\star(b_t(E^t))$.
\label{prop:Bellman}
\end{proposition}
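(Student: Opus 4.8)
The plan is to use the Bellman equation~\eqref{eq:Bellman} as a verification identity: I would show that \emph{every} admissible policy has $n$-horizon throughput at most $\lambda+O(1/n)$, and that the stationary policy built from $g^\star$ attains $\lambda-O(1/n)$; taking $\liminf_{n\to\infty}$ then forces $\Conline=\lambda$ and simultaneously exhibits $\mathbf{g}^\star$ as optimal. Boundedness of $h$ is exactly what makes the $O(1/n)$ telescoping errors vanish.

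For the upper bound, I would fix an arbitrary admissible $\mathbf{g}$ with induced battery process $\{b_t\}$, so that $0\le g_t(E^t)\le b_t$ and $b_{t+1}=\min\{b_t-g_t+E_{t+1},\bar B\}$. Since $g_t$ is a feasible action in the supremum of~\eqref{eq:Bellman} evaluated at $b=b_t$, and $E_{t+1}$ is independent of $E^t$ with the common arrival law, I get
\[
\tfrac12\log(1+\gamma g_t)\ \le\ \lambda+h(b_t)-\mathbb{E}[\,h(b_{t+1})\mid E^t\,].
\]
Taking expectations, summing over $t=1,\dots,n$, telescoping the $h$-terms and using $h\ge0$ yields
\[
\mathbb{E}\Big[\textstyle\sum_{t=1}^{n}\tfrac12\log(1+\gamma g_t)\Big]\ \le\ n\lambda+\mathbb{E}[h(b_1)]-\mathbb{E}[h(b_{n+1})]\ \le\ n\lambda+\sup_{b\in[0,\bar B]}h(b).
\]
Dividing by $n$ and letting $n\to\infty$ gives $\liminf_{n}\T_n(\mathbf{g})\le\lambda$; taking the supremum over admissible $\mathbf{g}$ gives $\Conline\le\lambda$.

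For the matching lower bound, I would apply the same computation to the stationary policy $g_t^\star(E^t)=g^\star(b_t)$, which is admissible because $0\le g^\star(b)\le b$. Since $g^\star$ \emph{attains} the supremum in~\eqref{eq:Bellman}, the inequality above becomes an equality at every $t$, so
\[
\T_n(\mathbf{g}^\star)\ =\ \lambda+\tfrac1n\big(\mathbb{E}[h(b_1)]-\mathbb{E}[h(b_{n+1})]\big)\ \ge\ \lambda-\tfrac1n\sup_{b}h(b),
\]
and letting $n\to\infty$ gives $\liminf_n\T_n(\mathbf{g}^\star)\ge\lambda$, hence $\Conline\ge\lambda$. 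Combining the two bounds yields $\Conline=\lambda$, with the stationary $\mathbf{g}^\star$ achieving it.

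The telescoping and the sum/expectation interchange are routine, since the per-stage reward is bounded ($\tfrac12\log(1+\gamma g_t)\le\tfrac12\log(1+\gamma\bar B)$ because $g_t\le\bar B$) and $h$ is bounded by hypothesis. I expect the only genuinely delicate point to be the existence of a \emph{measurable} maximizer $g^\star(\cdot)$ of the right-hand side of~\eqref{eq:Bellman}; this is assumed in the statement here, but a self-contained argument would need a measurable-selection theorem together with continuity/compactness of the one-stage problem over $g\in[0,b]$ — precisely the content supplied by the general average-reward MDP theory of~\cite{arapostathis1993discrete}.
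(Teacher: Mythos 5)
Your verification argument is correct, and it is worth noting that the paper does not actually prove this proposition at all: it is imported wholesale from the general average-reward MDP literature (Theorem 6.1 of Arapostathis et al.), which establishes the average-cost optimality equation under broad conditions on Borel state and action spaces. What you supply instead is the direct, elementary telescoping (``verification theorem'') proof, which is entirely adequate here because the setting is benign: the state space $[0,\bar B]$ is compact, the per-stage reward is bounded by $\tfrac12\log(1+\gamma\bar B)$, and $h$ is bounded and nonnegative by hypothesis, so the boundary terms $\mathbb{E}[h(b_1)]-\mathbb{E}[h(b_{n+1})]$ are $O(1)$ and vanish after dividing by $n$. The conditional-expectation step is sound since $b_t$ and $g_t(E^t)$ are $\sigma(E^t)$-measurable while $E_{t+1}$ is independent of $E^t$, so the Bellman inequality applies pointwise at $b=b_t$, $g=g_t$. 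Your argument in fact yields slightly more than stated: $\limsup_n\T_n(\mathbf{g})\le\lambda$ for every admissible policy and $\lim_n\T_n(\mathbf{g}^\star)=\lambda$ for the stationary one. The only technical point you correctly flag is measurability of the selector $b\mapsto g^\star(b)$; since the proposition's hypothesis already posits that a maximizer exists, only its measurability in $b$ needs the selection machinery, and that is precisely what the cited general theory (or a standard measurable-selection theorem applied to the continuous one-stage objective over the compact interval $[0,b]$) provides. In short: the paper buys the result by citation at full generality; your proof buys self-containedness at the cost of leaning on the assumed existence of $h$ and a measurable $g^\star$.
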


The functional equation~\eqref{eq:Bellman} is hard to solve explicitly, and requires an exact model for the statistical distribution of the energy arrivals $E_t$, which may be hard to obtain in practical scenarios. The equation can be solved numerically using value iteration~\cite{Bertsekas2001vol2}, but this can be computationally demanding and the numerical solution cannot provide insight as to the structure of the optimal policy and the qualitative behavior of the optimal throughput, namely how it varies with the parameters of the problem. 

In the sequel, we propose an explicit online power control policy and show that it is within a constant gap to optimality for all i.i.d. harvesting processes. This policy depends on the harvesting process only through its mean, and does not depend on the channel gain whereas the optimal solution may depend on $\gamma$. It also leads to a simple and insightful approximation of the achievable throughput. We first discuss a special case in which the optimal online solution can be explicitly found. This inspires the approximately optimal power control policy for general i.i.d. energy harvesting processes.

\section{Bernoulli Energy Arrivals}
\label{sec:bernoulli_energy_arrivals}

Assume the energy arrivals $E_t$ are i.i.d. Bernoulli random variables (RVs):
\begin{equation}\label{eq:Bernoulli_Et}
	E_t=
	\begin{cases}
		\bar{B}&\text{w.p. }p\\
		0&\text{w.p. }1-p,
	\end{cases}
\end{equation}
i.e. at each time $t$ either the battery is fully charged to $\bar{B}$ with probability $p$ or no energy is harvested at all with probability $1-p$.
This simple case was studied extensively in~\cite{ShavivOzgur2015,Kazerouni2015}, and was shown there to be solved exactly. Specifically, we have the following Theorem, which we prove in Appendix~\ref{sec:kkt_solution}.
\begin{theorem}\label{thm:Bernoulli}
Let the energy harvesting process be defined by~\eqref{eq:Bernoulli_Et}.
Let $j_t(E^t)$ be the time of the last energy arrival, i.e.
\[
j_t(E^t) = \{\sup\ \tau\leq t:\ E_{\tau}=\bar{B}\}.
\]
Then the optimal policy is given by
\begin{equation}
g_t^\star(E^t)=\begin{cases}
\frac{1}{\gamma}\left(\frac{\tilde{N}+\gamma\bar{B}}{1-(1-p)^{\tilde{N}}}p(1-p)^{t-j_t}-1\right)
	&,t-j_t<\tilde{N}\\
0	&,t-j_t\geq\tilde{N}
\end{cases}
\label{eq:Bernoulli_gt_solution}
\end{equation}
where $\tilde{N}$ is the smallest positive integer satisfying 
\[
1 > (1-p)^{\tilde{N}}[1+p(\gamma\bar{B}+\tilde{N})].
\]
\end{theorem}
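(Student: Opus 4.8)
The plan is to exploit the special structure of Bernoulli arrivals to collapse the infinite-horizon MDP of Proposition~\ref{prop:Bellman} into a static convex program, and then solve that program with the KKT conditions. The key observation is that under~\eqref{eq:Bernoulli_Et} every energy arrival recharges the battery \emph{exactly} to $\bar{B}$: since admissibility forces $b_{t-1}-g_{t-1}\ge 0$, we have $\min\{b_{t-1}-g_{t-1}+\bar{B},\bar{B}\}=\bar{B}$; and between two consecutive arrivals no energy enters, so the battery merely drains deterministically. Hence, starting from $b_1=\bar{B}$, a stationary policy $g^\star(\cdot)$ generates a single deterministic trajectory of battery levels $\beta_0=\bar{B}$, $\beta_{k+1}=\beta_k-g^\star(\beta_k)$, and is therefore fully described by the sequence $g_k:=g^\star(\beta_k)$ of powers applied $k$ slots after the most recent arrival; the constraints~\eqref{eq:power}--\eqref{eq:battery} reduce to $g_k\ge 0$ and $\sum_{k\ge 0}g_k\le\bar{B}$. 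By Proposition~\ref{prop:Bellman} it then suffices to optimize over such sequences.

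Next I would identify the long-run average throughput of a policy $(g_k)$ with a stationary average. The ``time since last arrival'' process $K_t:=t-j_t(E^t)$ is itself a Markov chain: at each step it resets to $0$ with probability $p$ and increments to $K_t+1$ with probability $1-p$, independently of the past. For $p\in(0,1)$ this chain is ergodic with stationary distribution $\pi_k=p(1-p)^k$, so $\liminf_{n\to\infty}\T_n(\mathbf{g})=\sum_{k\ge 0}p(1-p)^k\,\tfrac12\log(1+\gamma g_k)$. The problem thus becomes: maximize this concave functional over the convex set $\{g_k\ge 0,\ \sum_k g_k\le\bar{B}\}$.

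I would then solve this by a Lagrangian/KKT argument, which is conclusive since the program is convex. Because $\tfrac12\log(1+\gamma g)$ is strictly increasing, the budget constraint is tight; stationarity with a multiplier $\nu>0$ for the budget yields the water-filling form $g_k=\bigl(C\,p(1-p)^k-\tfrac1\gamma\bigr)^+$ for a constant $C>0$. Since $(1-p)^k$ is decreasing, the support of $(g_k)$ is an initial segment $\{0,\dots,\tilde{N}-1\}$; imposing $\sum_{k=0}^{\tilde{N}-1}g_k=\bar{B}$ and summing the geometric series pins down $C=\tfrac{\tilde{N}+\gamma\bar{B}}{\gamma(1-(1-p)^{\tilde{N}})}$, and substituting back gives precisely~\eqref{eq:Bernoulli_gt_solution} with $t-j_t=k$. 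To identify $\tilde{N}$, note that the endpoint conditions $g_{\tilde{N}-1}>0\ge g_{\tilde{N}}$ are both controlled by the sign of $\phi(N)-1$, where $\phi(N):=(1-p)^N\bigl(1+p(N+\gamma\bar{B})\bigr)$; a short check that $\phi(N+1)/\phi(N)=(1-p)\bigl(1+\tfrac{p}{1+p(N+\gamma\bar{B})}\bigr)<1$ shows $\phi$ strictly decreases from $\phi(0)=1+p\gamma\bar{B}\ge 1$ to $0$, so $\tilde{N}=\min\{N\ge 1:\phi(N)<1\}$ is well defined and self-consistent, matching the statement. As a final step I would feed the resulting policy together with $\lambda=\sum_k\pi_k\tfrac12\log(1+\gamma g_k)$ back into the Bellman equation~\eqref{eq:Bellman} (with the natural bounded choice of $h$) to certify global optimality via Proposition~\ref{prop:Bellman}.

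The routine part is the KKT computation and the manipulation of $\phi$. The part I expect to require the most care is the reduction itself: rigorously justifying that one may restrict attention to stationary policies depending on the state only through $t-j_t$, and that $\liminf_n\T_n$ genuinely equals the stationary ensemble average (ergodic averaging together with an interchange of limit and infinite sum), as well as the boundary convention in the definition of $\tilde{N}$ in the degenerate case $\phi(\tilde{N}-1)=1$.
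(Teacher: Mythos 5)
Your proposal is correct and follows essentially the same route as the paper's Appendix~C: exploit the fact that every Bernoulli arrival fully recharges the battery and the drain in between is deterministic, reduce the problem to maximizing $\sum_{i\ge1}p(1-p)^{i-1}\tfrac{1}{2}\log(1+\gamma\mathcal{E}_i)$ subject to $\mathcal{E}_i\ge0$, $\sum_i\mathcal{E}_i\le\bar{B}$, and solve it by KKT water-filling, with $\tilde{N}$ pinned down by the sign of $(1-p)^{N}[1+p(\gamma\bar{B}+N)]-1$. The only differences are in scaffolding: the paper invokes an existence theorem for stationary average-optimal policies and the LLN for regenerative processes, and truncates the infinite-dimensional program to finite $N$ before applying KKT, whereas you use the stationary distribution of the age chain $t-j_t$, apply KKT directly to the infinite program, and close with a Bellman-equation verification instead of the cited existence result.
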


It can be seen that this is a stationary policy, i.e. $g_t^\star$ can be written as a time-invariant function of $b_t$.
Roughly speaking, the energy is allocated only to the first $\tilde{N}$ time slots after each battery recharge and decays in an approximately exponential manner.

For the purpose of extending this policy to general i.i.d. processes in the next section, it is useful to simplify it to the following form by preserving its exponentially decaying structure: 
\begin{equation}\label{Bernoulliform1}
g_t(E^t)=\bar{B}p(1-p)^{t-j_t},
\end{equation}
where $j_t$ is the time of the last energy arrival, as defined above. With this simplified policy, the amount of energy we allocate to each time slot decreases exactly exponentially with the time elapsed since the last battery recharge (or equivalently energy arrival). 
Note that this is clearly an admissible strategy since 
$$
\sum_{k=j_t}^\infty\bar{B}p(1-p)^{k-j_t}=\bar{B},
$$
i.e. the total energy we allocate until the next battery recharge can never exceed $\bar{B}$, the amount of energy initially available in the battery.
Another way to view this strategy is that we always use $p$ fraction of the remaining energy in the battery, i.e. 
\begin{equation}\label{Bernoulliform2}
g_t(b_t)=pb_t,
\end{equation} where $b_t$ is the available energy in the battery given by 
$b_t= (1 - p)^{t-j_t}\bar{B}$.
Hence, it is a stationary policy.

This simplified policy can be intuitively motivated as follows: for the Bernoulli
arrival process $E_t$, the inter-arrival time is a geometric random
variable with parameter $p$. Because the geometric
random variable is memoryless and has mean $1/p$,
at each time step the expected time to the next
energy arrival is $1/p$. Since $\log(\cdot)$ is a concave
function, uniform allocation of energy maximizes throughput, i.e. if the current energy level
in the battery is $b_t$ and we knew that the next battery recharge  would be in exactly $m$ channel uses, allocating $b_t/m$ energy to each of the next $m$ channel uses would maximize throughput. For the online case of interest here, we can instead use the expected time to the next energy arrival: since at each time
step, the expected time to the next energy arrival is $1/p$, we
always allocate a fraction $p$ of the currently available energy in the battery.   
Fig.~\ref{fig:Bernoullipowercontrol} illustrates this power control policy.

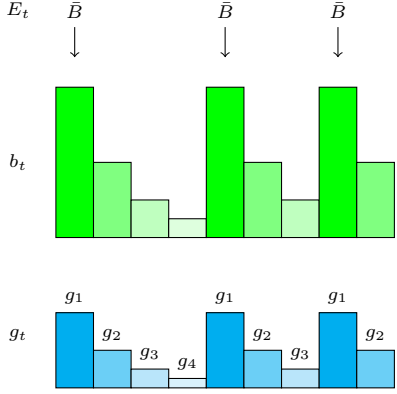
\begin{figure}
\centering
\begin{tikzpicture}
	\tikzstyle{every node}=[font=\scriptsize];
	
	\foreach \x/\g/\i in {0/1/1, 0.5/0.5/2, 1/0.25/3, 1.5/0.125/4, 2/1/1, 2.5/0.5/2, 3/0.25/3, 3.5/1/1, 4/0.5/2}
	{
		\FPeval{\mycolor}{round(100*\g,0)};
		\filldraw[fill=cyan!\mycolor!white,draw=black] (\x,0) -- (\x+.5,0) -- (\x+.5,\g) -- (\x,\g) -- cycle;
		\node[above] at (\x+.25,\g) {$g_{\i}$};
	}
	
	\node at (-.5,.7) {$g_t$};	
	
	\def \h {2};
	
	\foreach \x/\b/\e/\i/\mygreen/\myyellow in {0/2/\bar{B}/1/100/100, 0.5/1/0/2/0/0, 1/0.5/0/3/0/100, 1.5/0.25/0/4/30/100, 2/2/\bar{B}/5/100/100, 2.5/1/0/6/0/0, 3/0.5/0/7/40/100, 3.5/2/\bar{B}/8/100/100, 4/1/0/9/0/40}
	{
		\FPeval{\mycolor}{round(50*\b,0)};
		\filldraw[fill=green!\mycolor!white,draw=black] (\x,\h) -- (\x+.5,\h) -- (\x+.5,\h+\b) -- (\x,\h+\b) -- cycle;
	}
	
	\foreach \x in {0,2,3.5}
	{
		\draw[->] (\x+.25,\h+2.8) node[above] {$\bar{B}$} -- (\x+.25,\h+2.4);
	}
	
	\node[above] at (-.5,\h+2.8) {$E_t$};
	\node at (-.5,\h+1) {$b_t$};
\end{tikzpicture}
\caption{The approximately optimal online power control policy for Bernoulli energy arrivals.}
\label{fig:Bernoullipowercontrol}
\end{figure}

While this simplified policy is clearly suboptimal, it was shown in \cite{DongFarniaOzgur2015} to be at most within a gap of $0.97$ to optimality for all values of $\bar{B}$ and $p$. Here we improve upon this bound. Before we state this result, we present the following proposition which provides a simple upper bound on the achievable throughput for any i.i.d. harvesting process. The proof follows from Jensen's inequality, and can be found in e.g.~\cite{Mitran,online_infiniteB4}.
For completeness, we provide the proof in Appendix~\ref{sec:upper_bound}.

\begin{proposition}\label{prop:upperbound}
The optimal throughput under any i.i.d. harvesting process $E_t$ is bounded by
\[
\Conline
\leq\frac{1}{2}\log(1+\gamma\mu),
\]
where $\mu\triangleq \mathbb{E}[\min\{E_t,\bar{B}\}]$.
\end{proposition}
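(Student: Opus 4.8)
The plan is to combine a telescoping bound on the battery dynamics with Jensen's inequality. The first step is to control the total energy spent over a finite horizon. Using the battery recursion~\eqref{eq:battery} together with the admissibility constraint~\eqref{eq:power}, which forces $b_t-g_t\geq 0$, I would observe that for every realization of the arrivals,
\[
b_{t+1}=\min\{b_t-g_t+E_{t+1},\,\bar{B}\}\leq\min\{b_t-g_t+E_{t+1},\,(b_t-g_t)+\bar{B}\}=(b_t-g_t)+\min\{E_{t+1},\bar{B}\},
\]
and hence $g_t\leq b_t-b_{t+1}+\min\{E_{t+1},\bar{B}\}$. This is the only genuinely substantive step: the nontrivial point is extracting $\min\{E_{t+1},\bar{B}\}$ rather than the cruder bound $E_{t+1}$, since this is exactly what makes the final estimate depend on $\mu=\mathbb{E}[\min\{E_t,\bar{B}\}]$ instead of on $\mathbb{E}[E_t]$.

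Summing this inequality over $t=1,\ldots,n$ telescopes the battery terms, so that, using $b_1=\bar{B}$ and $b_{n+1}\geq 0$,
\[
\sum_{t=1}^{n}g_t\leq \bar{B}-b_{n+1}+\sum_{t=2}^{n+1}\min\{E_t,\bar{B}\}\leq \bar{B}+\sum_{t=2}^{n+1}\min\{E_t,\bar{B}\}.
\]
Taking expectations and dividing by $n$, the i.i.d. assumption gives $\tfrac{1}{n}\mathbb{E}\big[\sum_{t=1}^{n}g_t(E^t)\big]\leq \mu+\bar{B}/n$, since the sum on the right has exactly $n$ terms each of mean $\mu$.

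Finally I would invoke Jensen's inequality for the concave map $x\mapsto\tfrac{1}{2}\log(1+\gamma x)$:
\[
\T_n(\mathbf{g})=\frac{1}{n}\mathbb{E}\Big[\sum_{t=1}^{n}\tfrac{1}{2}\log(1+\gamma g_t(E^t))\Big]\leq \tfrac{1}{2}\log\Big(1+\frac{\gamma}{n}\mathbb{E}\Big[\textstyle\sum_{t=1}^{n}g_t(E^t)\Big]\Big)\leq \tfrac{1}{2}\log\big(1+\gamma(\mu+\bar{B}/n)\big).
\]
Letting $n\to\infty$ makes the $\bar{B}/n$ term vanish, so $\liminf_{n\to\infty}\T_n(\mathbf{g})\leq\tfrac{1}{2}\log(1+\gamma\mu)$ for every admissible $\mathbf{g}$; taking the supremum over admissible policies yields $\Conline\leq\tfrac{1}{2}\log(1+\gamma\mu)$. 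No step beyond the first display presents a real obstacle; the only point needing a little care is handling the $O(1/n)$ term cleanly when passing to the $\liminf$.
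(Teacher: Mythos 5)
Your proof is correct and follows essentially the same route as the paper's: Jensen's inequality for the concave rate function combined with the energy-conservation bound $\sum_{t=1}^{n}g_t\leq\bar{B}+\sum_t\min\{E_t,\bar{B}\}$, followed by letting $n\to\infty$. The only difference is cosmetic — you derive the conservation inequality explicitly by telescoping the battery recursion (and extract the clipping to $\bar{B}$ directly from it), whereas the paper states it after first replacing $E_t$ by $\tilde{E}_t=\min\{E_t,\bar{B}\}$ as a preliminary reduction.
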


We next lower bound the performance of our simplified policy in terms of this upper bound.
\begin{proposition}\label{prop:Bernoulli_gap}
Let $E_t$ be given by~\eqref{eq:Bernoulli_Et} and consider the policy $\mathbf{g}$ given by~\eqref{Bernoulliform1} (or equivalently ~\eqref{Bernoulliform2}).
Then the gap to optimality is bounded as follows:
\begin{equation}
\liminf_{n\to\infty}\T_n(\mathbf{g})\geq \frac{1}{2}\log(1+\gamma\mu)-0.72.
\label{eq:Bernoulli_lower_bound_final}
\end{equation}
\end{proposition}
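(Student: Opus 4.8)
The plan is to compute the long-term average throughput of the simplified policy $\mathbf{g}$ given by~\eqref{Bernoulliform1} explicitly (or nearly so), and then compare it term-by-term with the upper bound $\tfrac12\log(1+\gamma\mu)$ from Proposition~\ref{prop:upperbound}, where for the Bernoulli process $\mu = p\bar B$. The key structural observation is that under this policy the battery process is regenerative: every energy arrival resets the battery to $\bar B$, and the inter-arrival times are i.i.d.\ geometric with parameter $p$. Within a regeneration block of length $m$ (which occurs with probability $p(1-p)^{m-1}$), the allocated powers are $g = \bar B p (1-p)^{k}$ for $k = 0, 1, \ldots, m-1$. So by the renewal-reward theorem the long-term average throughput equals
\[
\liminf_{n\to\infty}\T_n(\mathbf{g}) = \frac{\mathbb{E}\!\left[\sum_{k=0}^{M-1} \tfrac12\log\!\big(1+\gamma\bar B p (1-p)^{k}\big)\right]}{\mathbb{E}[M]},
\]
where $M$ is geometric with parameter $p$, so $\mathbb{E}[M]=1/p$. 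Expanding the expectation over $M$, this becomes $p\sum_{k=0}^\infty (1-p)^k \cdot \tfrac12\log(1+\gamma\bar B p(1-p)^k)$ (after interchanging the two sums, which is justified since all terms are nonnegative).

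First I would reduce to this clean series and then try to lower bound it by $\tfrac12\log(1+\gamma\mu) - c$ for an explicit constant $c$. The natural way is to split the sum at the index $k^\star$ where $\gamma\bar B p(1-p)^k$ crosses $1$, i.e.\ roughly $k^\star \approx \log(\gamma\mu)/\log\frac{1}{1-p}$. For $k$ below $k^\star$ the term behaves like $\tfrac12\log(\gamma\bar B p(1-p)^k)$ and the weighted sum of the $k\log(1-p)$ contributions is a geometric-type sum that can be evaluated in closed form; for $k$ above $k^\star$ the terms are $O((1-p)^k)$ and contribute a bounded amount. Matching this against $\tfrac12\log(1+\gamma\mu)$, the leading logarithmic behavior cancels and one is left controlling a residual that is uniformly bounded in $\bar B$ and $p$. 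A cleaner route, which I expect the paper takes, is to lower bound each summand using concavity/linearization of $\log$ or by the inequality $\log(1+x)\ge \log x$ for $x\ge 1$ and $\log(1+x)\ge x\log 2$-type bounds for $x\le 1$, then assemble the pieces; alternatively one can use Jensen in the reverse direction on the normalized weights $p(1-p)^k$ together with the identity $\sum_k p(1-p)^k \gamma\bar B p(1-p)^k = \gamma\bar B p^2/(2-p)$, though this loses a factor and would need care to still land within $0.72$.

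The main obstacle is getting the constant down to $0.72$ uniformly over all $(\bar B, p)$ — in particular handling the two extreme regimes simultaneously: $p\to 1$ (where the block is essentially a single slot and the policy is near-optimal, gap $\to 0$), and $p\to 0$ with $\gamma\bar B$ fixed or growing (where the block is long, the allocation decays slowly, and the loss relative to uniform allocation over the true block length is largest). I would handle $p\to 0$ by comparing the exponentially-decaying allocation to the genuinely optimal within-block uniform allocation $b/m$ and bounding the per-block Jensen gap, showing it stays below the target after normalizing by $\mathbb{E}[M]$; the worst case should occur at an intermediate, computable value of the relevant parameter $\gamma\bar B$ (or $\gamma\mu$), and a single-variable optimization pins down the $0.72$. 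The remaining steps — interchanging sums, evaluating the geometric series $\sum k(1-p)^k$, and the final numerical maximization — are routine once the regimes are separated.
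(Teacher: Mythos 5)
Your setup is exactly the paper's: the policy makes $g_t$ a non-delayed regenerative process with geometric epoch length $L$, and the renewal-reward/LLN argument reduces the long-term throughput to the series $p\sum_{k=0}^{\infty}(1-p)^{k}\,\tfrac12\log\bigl(1+\gamma\bar Bp(1-p)^{k}\bigr)$ (the paper reaches the same expression in~\eqref{eq:Bernoulli_scheme_exact_TP}; note only that it needs Fatou's lemma to pass from $\liminf_n$ of the expectation to the expectation of the a.s.\ limit, a direction that happily matches the desired inequality). The genuine gap is everything after that point: the proposition lives or dies on producing the \emph{uniform} constant $0.72$, and you do not carry out any of the three candidate strategies you float (splitting at the crossing index $k^\star$, linearizing $\log$, reverse Jensen on the weights). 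The missing ingredient is a single elementary inequality: $\log(1+\alpha x)\ge\log(1+x)+\log\alpha$ for $0<\alpha\le1$, applied with $\alpha=(1-p)^{i-1}$ and $x=\gamma p\bar B$. This turns each summand into $\tfrac12\log(1+\gamma p\bar B)+(i-1)\tfrac12\log(1-p)$, so the series evaluates in closed form via $\mathbb{E}[L]$ and $\mathbb{E}[L^2]$ to
\begin{equation*}
\tfrac12\log(1+\gamma p\bar B)-\frac{1-p}{2p}\log\frac{1}{1-p},
\end{equation*}
and the residual term is bounded by its supremum $\tfrac{1}{2\ln2}\approx0.72$ over $p\in(0,1)$.

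Two of your stated intuitions would also have steered the execution astray. First, you predict the worst case occurs at "an intermediate, computable value of $\gamma\bar B$ (or $\gamma\mu$)" requiring a single-variable optimization in that parameter; in fact the residual above is completely independent of $\gamma\bar B$, depends only on $p$, and is worst in the boundary limit $p\to0$. Second, the route of "comparing the exponentially-decaying allocation to the optimal within-block uniform allocation $b/m$ and bounding the per-block Jensen gap" is substantially harder than what is needed, because the within-block comparison must be done against the \emph{upper bound} $\tfrac12\log(1+\gamma\mu)$, not against the offline-optimal block allocation, and quantifying that detour uniformly in $(p,\bar B)$ is not routine. So while your regenerative reduction is correct and identical to the paper's, the proof of the actual claimed bound is absent.
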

See Section~\ref{subsec:Bernoulli} for the proof.

The two propositions above state that the simplified policy is always within 0.72 bits/channel use of optimality. Numerical evaluations show, however, that the real gap  to optimality is much smaller than the one given in Proposition~\ref{prop:Bernoulli_gap}. In fact, Fig.~\ref{fig:Bernoulli_throughput_plots} shows that the throughput obtained by our simplified suboptimal policy is almost indistinguishable from the optimal throughput.

\begin{figure}
\centering
\includegraphics[width=21pc]{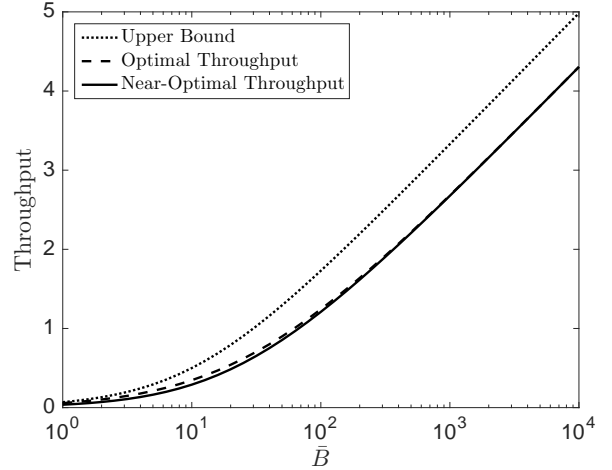}
\caption{Optimal throughput and near-optimal throughput using the simplified Bernoulli policy, for $p=0.1$ and $\gamma=1$.
These are plotted along with the upper bound suggested in Proposition~\ref{prop:upperbound}.}
\label{fig:Bernoulli_throughput_plots}
\end{figure}

An additive bound is especially useful when $\bar{B}$ is very large; as seen in Fig. \ref{fig:Bernoulli_throughput_plots}, the gap remains constant while the throughput grows unboundedly.
For small $\bar{B}$, however, the additive gap becomes useless as the optimal throughput itself falls below $0.72$.
Yet, it should be clear from the figure that as the throughput becomes small, the numerically evaluated additive gap also tends to zero.
We capture this fact in the following proposition where we provide a bound on the \emph{ratio} between the throughput achieved by the simplified Bernoulli policy and the  upper bound on the throughput.
\begin{proposition}
\label{prop:Bernoulli_mult}
For i.i.d. energy arrivals given by \eqref{eq:Bernoulli_Et} and the policy given by \eqref{Bernoulliform1}, the ratio to optimality is bounded below as follows:
\[
\liminf_{n\to\infty}\T_n(\mathbf{g})\geq \frac{1}{2}\cdot\frac{1}{2}\log(1+\gamma\mu) .
\]
\end{proposition}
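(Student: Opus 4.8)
The plan is to pass to the stationary regime of the simplified policy and then invoke concavity of the logarithm through an elementary chord bound. First I would set up the steady-state description. For the Bernoulli arrivals~\eqref{eq:Bernoulli_Et} one has $\mu=\mathbb{E}[\min\{E_t,\bar B\}]=p\bar B$, and under the policy~\eqref{Bernoulliform1}--\eqref{Bernoulliform2} the battery satisfies $b_{t+1}=\bar B$ when $E_{t+1}=\bar B$ and $b_{t+1}=(1-p)b_t$ otherwise; hence $t-j_t$ is a Markov chain that resets to $0$ with probability $p$ and increments by $1$ with probability $1-p$, with geometric stationary law $\mathbb{P}(t-j_t=k)=p(1-p)^k$, $k\ge0$. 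Following the same limiting argument used to prove Proposition~\ref{prop:Bernoulli_gap} (Section~\ref{subsec:Bernoulli}),
\[
\liminf_{n\to\infty}\T_n(\mathbf{g})=\tfrac12\,\mathbb{E}\big[\log\big(1+\gamma\mu\,Y\big)\big],\qquad Y\triangleq(1-p)^{K},
\]
where $K$ is geometric with parameter $p$, so that $Y\in(0,1]$, and I used $\gamma\bar B p=\gamma\mu$.

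Then comes the key step. Since $\phi(y)\triangleq\log(1+\gamma\mu\,y)$ is concave and vanishes at $y=0$, for every $y\in[0,1]$
\[
\phi(y)=\phi\big(y\cdot1+(1-y)\cdot0\big)\ge y\,\phi(1)+(1-y)\,\phi(0)=y\log(1+\gamma\mu).
\]
Applying this pointwise with $y=Y\le1$ and taking expectations gives $\liminf_{n\to\infty}\T_n(\mathbf{g})\ge\tfrac12\,\mathbb{E}[Y]\,\log(1+\gamma\mu)$. It only remains to evaluate $\mathbb{E}[Y]=\sum_{k\ge0}p(1-p)^{2k}=\frac{p}{1-(1-p)^2}=\frac{1}{2-p}$, which is at least $\tfrac12$ for all $p\in(0,1)$; substituting yields $\liminf_{n\to\infty}\T_n(\mathbf{g})\ge\frac14\log(1+\gamma\mu)$, which is the claim.

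I do not anticipate a real obstacle: the only delicate point is the first step, i.e.\ the passage from the finite-horizon average $\T_n(\mathbf{g})$ to the stationary expectation --- the transient due to $b_1=\bar B$ and to a possibly late first energy arrival being immaterial for the $\liminf$ --- and this is precisely the reduction already performed for Proposition~\ref{prop:Bernoulli_gap}. If a fully self-contained argument is preferred, one can avoid the stationary law entirely: applying the chord bound slot-by-slot, and using $g_t=pb_t$ so that $\gamma g_t=\gamma\mu\,(b_t/\bar B)$ with $b_t/\bar B\le1$, gives $\log(1+\gamma g_t)\ge\frac{b_t}{\bar B}\log(1+\gamma\mu)$, hence $\T_n(\mathbf{g})\ge\frac{\log(1+\gamma\mu)}{2\bar B}\cdot\frac1n\sum_{t=1}^n\mathbb{E}[b_t]$; and the recursion $\mathbb{E}[b_{t+1}]=(1-p)^2\mathbb{E}[b_t]+p\bar B$ with $b_1=\bar B$ shows $\mathbb{E}[b_t]\ge\frac{\bar B}{2-p}\ge\frac{\bar B}{2}$ for every $t$, which again produces the stated bound.
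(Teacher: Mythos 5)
Your main argument is correct and is essentially the paper's proof in different clothing: the chord bound $\phi(y)\ge y\,\phi(1)$ for the concave $\phi(y)=\log(1+\gamma\mu y)$ is exactly the inequality $\log(1+\alpha x)\ge\alpha\log(1+x)$ used in step (i) of the paper, and your computation $\mathbb{E}[(1-p)^K]=\frac{1}{2-p}\ge\frac12$ under the stationary geometric law of $t-j_t$ is algebraically identical to the paper's renewal--reward sum over one epoch (interchanging the order of summation in the paper's double sum recovers your stationary expectation). One cosmetic caution: you assert the first display as an equality, whereas the paper's reduction via Fatou and the regenerative LLN only delivers ``$\ge$''; that is all you need, so state it as an inequality. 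Your supplementary slot-by-slot argument, however, is a genuinely different and more elementary route worth noting: applying the chord bound pointwise as $\log(1+\gamma q b_t)\ge\frac{b_t}{\bar B}\log(1+\gamma\mu)$ and then running the one-step recursion $\mathbb{E}[b_{t+1}]=p\bar B+(1-p)^2\mathbb{E}[b_t]$ from $b_1=\bar B$ gives $\mathbb{E}[b_t]\ge\frac{\bar B}{2-p}$ for every finite $t$, hence a bound on $\T_n(\mathbf{g})$ for every $n$, with no appeal to renewal theory or to the $\liminf$ machinery at all; what it gives up is the exact per-epoch expression that the paper reuses elsewhere.
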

The proof is provided in Section~\ref{subsec:Bernoulli_mult}.

The two propositions together imply that the simplified Bernoulli policy is good for all values of $\bar{B}$, i.e. across all SNR regimes.

\section{Approximately Optimal Policy for General i.i.d. Energy Arrival Processes}
\label{sec:suboptimal_policies}

We now assume that $E_t$ is an i.i.d. process with an arbitrary distribution. As discussed in Section~\ref{sec:preliminary_discussion}, finding the optimal solution for this general case is a hard problem.
In this section, we present a natural extension of the approximately optimal policy~\eqref{Bernoulliform2} in the Bernoulli case and show that it is approximately optimal for all i.i.d. harvesting processes. The policy reduces to~\eqref{Bernoulliform2} when the harvesting process is Bernoulli.

Before presenting the policy, observe that without loss of generality we can replace the random process $E_t$ with $\tilde{E}_t\triangleq\min\{E_t,\bar{B}\}$ without changing the system because of the store-and-use model we assume in \eqref{eq:power} and \eqref{eq:battery}. This is due to the fact that whenever an energy arrival $E_t$ is larger than $\bar{B}$, it will be clipped to at most $\bar{B}$. Denote $\mu=\mathbb{E}[\tilde{E}_t]$.

\paragraph*{The Fixed Fraction Policy}

Let $q\triangleq \mu/\bar{B}$. Note that $\mu\in[0,\bar{B}]$ so $q\in[0,1]$.
We will use $q$ here instead of the parameter $p$ in the Bernoulli case. Notice that in that case, we also have $\mathbb{E}[E_t]=p\bar{B}$, hence this is a natural definition.
The Fixed Fraction Policy is defined as follows:
\[
g_t=q b_t.\qquad t=1,2,\ldots
\]
Inspired by \eqref{Bernoulliform2}, at each time slot, this policy allocates a fraction $q$ of the currently available energy in the battery.
Clearly this is an admissible policy, since $q\leq 1$.

\subsection{Main Result}



The main result of this paper is that the Fixed Fraction Policy achieves the upper bound in Proposition~\ref{prop:upperbound} within a constant additive gap and a constant multiplicative factor for any i.i.d. process.
We prove this result by showing that under this policy, the Bernoulli harvesting process yields the \emph{worst} performance compared to all other i.i.d. processes with the same mean~$\mu$.
This implies that the lower bounds obtained for the throughput achieved under  Bernoulli energy arrivals apply also to any i.i.d. harvesting process with the same mean, giving the following theorem.

\begin{theorem}\label{thm:onlinePC}
Let $E_t$ be an i.i.d. non-negative process with $\mu=\mathbb{E}[\min\{E_t,\bar{B}\}]$, and let $\mathbf{g}$ be the Fixed Fraction Policy.
Then, the throughput achieved by $\mathbf{g}$ is bounded by
\[
\liminf_{n\to\infty}\T_n(\mathbf{g})\geq\frac{1}{2}\log(1+\gamma\mu)-0.72,
\]
and
\[
\liminf_{n\to\infty}\T_n(\mathbf{g})\geq\frac{1}{2}\cdot\frac{1}{2}\log(1+\gamma\mu).
\] 
\end{theorem}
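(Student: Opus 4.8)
The plan is to reduce the general i.i.d.\ case to the Bernoulli case, for which Propositions~\ref{prop:Bernoulli_gap} and~\ref{prop:Bernoulli_mult} already give exactly the two claimed bounds (note that when $E_t$ is Bernoulli with parameter $p$ we have $\mu=p\bar{B}$, so $q=\mu/\bar{B}=p$ and the Fixed Fraction Policy coincides with the simplified policy~\eqref{Bernoulliform2}). As observed in Section~\ref{sec:suboptimal_policies}, we may replace $E_t$ by $\tilde{E}_t=\min\{E_t,\bar{B}\}\in[0,\bar{B}]$ without changing the system or the policy, and then $\mathbb{E}[\tilde{E}_t]=\mu$. Under the Fixed Fraction Policy the battery evolves as $b_{t+1}=F(b_t,\tilde{E}_{t+1})$ with $b_1=\bar{B}$ and $F(b,e)\triangleq\min\{(1-q)b+e,\bar{B}\}$, and $\T_n(\mathbf{g})=\frac{1}{n}\sum_{t=1}^{n}\mathbb{E}\big[\tfrac{1}{2}\log(1+\gamma q\,b_t)\big]$. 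It therefore suffices to show that for every fixed $n$ this quantity, as a function of the product law of $(\tilde{E}_2,\dots,\tilde{E}_n)$ on $[0,\bar{B}]^{n-1}$ with common mean $\mu$, is minimized by the i.i.d.\ two-point law that puts mass $q$ at $\bar{B}$ and mass $1-q$ at $0$; applying $\liminf_{n\to\infty}$ (which is monotone) and then Propositions~\ref{prop:Bernoulli_gap} and~\ref{prop:Bernoulli_mult} finishes the proof.

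The structural ingredient is a concavity lemma: regarded as a function of $(\tilde{E}_2,\dots,\tilde{E}_t)\in[0,\bar{B}]^{t-1}$, the battery level $b_t$ is concave and nondecreasing in each coordinate. This follows by induction on $t$. The map $F(b,e)$ is a minimum of affine functions, hence jointly concave, and is nondecreasing in each argument; by the vector composition rule for concave functions, $b_{t+1}=F(b_t,\tilde{E}_{t+1})$ inherits joint concavity and coordinatewise monotonicity from the inductive hypothesis on $b_t$ together with the affine (hence concave) dependence on the new coordinate $\tilde{E}_{t+1}$. Composing with the concave nondecreasing map $x\mapsto\tfrac{1}{2}\log(1+\gamma q x)$ then shows that each stage reward $r_t=\tfrac{1}{2}\log(1+\gamma q\,b_t)$ is, for every $\tau\le t$, a concave nondecreasing function of $\tilde{E}_\tau$ when the remaining coordinates are held fixed, and does not depend on $\tilde{E}_\tau$ for $\tau>t$.

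With this in hand I will run a hybrid argument that replaces the marginals one index at a time. Fix $\tau\in\{2,\dots,n\}$ and condition on all $\tilde{E}_s$ with $s\ne\tau$. For each $t\ge\tau$ the map $e\mapsto r_t$ is concave on $[0,\bar{B}]$, hence lies above the chord joining its values at $0$ and at $\bar{B}$; consequently its conditional expectation under \emph{any} law of $\tilde{E}_\tau$ with mean $\mu$ is at least its conditional expectation under the two-point law with mass $q=\mu/\bar{B}$ at $\bar{B}$. Averaging over the other (unchanged, independent) coordinates shows that every term $\mathbb{E}[r_t]$ with $t\ge\tau$ weakly decreases when the marginal of $\tilde{E}_\tau$ is replaced by this two-point law, while the terms with $t<\tau$ are unaffected; hence $\T_n(\mathbf{g})$ weakly decreases. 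Iterating over $\tau=2,\dots,n$ turns the arrival process into i.i.d.\ Bernoulli with parameter $q$ without ever increasing $\T_n(\mathbf{g})$, which establishes the reduction and therefore the theorem.

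The step I expect to require the most care is the concavity lemma together with the conditioning in the hybrid step: one must check that the composition of the clipped affine recursions is genuinely \emph{jointly} concave (so that restricting to a single coordinate, and doing so repeatedly after earlier marginals have been modified, is legitimate), and that the chord inequality is used in the correct direction, namely that concavity forces the graph above the chord and hence makes the expectation smallest at the two endpoints. Everything else is routine bookkeeping.
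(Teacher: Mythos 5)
Your proposal is correct, and it reaches the key reduction (Bernoulli is the worst case for the Fixed Fraction Policy) by a genuinely different route from the paper. The paper proves Proposition~\ref{prop:bernoulli_is_worst} by backward induction on the horizon $n$: it shows that the $n$-horizon value $\T_n(\mathbf{g},\hat{E}_1,x)$ of the \emph{Bernoulli-driven} chain is concave and non-decreasing in the initial state $x$, and in each induction step replaces the single arrival $\tilde{E}_2$ by $\hat{E}_2$ using Lemma~\ref{lemma:bernoulli_is_worst} applied to $z\mapsto\T_{n-1}(\mathbf{g},\hat{E}_1,\min\{(1-q)x+z,\bar{B}\})$, relying on the time-homogeneous Markov structure to peel off one step at a time. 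You instead work forward and pathwise: you establish that the clipped affine recursion makes $b_t$ (hence each stage reward) jointly concave and coordinatewise non-decreasing in the arrival vector $(\tilde{E}_2,\dots,\tilde{E}_t)$ via the standard composition rule for $\min\{(1-q)b+e,\bar{B}\}$, and then run a marginal-swapping (hybrid) argument over $\tau=2,\dots,n$, invoking the same extremal two-point lemma on each section. Both arguments hinge on the identical chord inequality (the paper's Lemma~\ref{lemma:bernoulli_is_worst}), used in the correct direction in your write-up; the difference is where the concavity lives. The paper's version must carry concavity and monotonicity of the Bernoulli value function through the induction (and needs the monotonicity to commute $\T_{n-1}$ with the $\min$), whereas yours needs only the deterministic joint concavity of the sample-path functional, which makes the exchange of one marginal at a time transparent and avoids any reference to the value function of the already-Bernoulli-ized tail. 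Your reduction of the theorem itself to Propositions~\ref{prop:Bernoulli_gap} and~\ref{prop:Bernoulli_mult}, including the observation that $q=p$ makes the Fixed Fraction Policy coincide with the simplified Bernoulli policy and that $\liminf$ preserves the per-$n$ inequality, matches the paper exactly.
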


The proof of this theorem is given in Section~\ref{subsec:general_energy_arrivals}. The following approximation for the optimal throughput is an immediate corollary of the above theorem and proposition.
\begin{corollary}\label{corollary}
The optimal throughput under any i.i.d. energy harvesting process $E_t$ is bounded by
\[
\frac{1}{2}\log(1+\gamma\mu)-0.72\leq\Conline
\leq\frac{1}{2}\log(1+\gamma\mu),
\]
and
\[
\frac{1}{2}\leq\frac{\Conline}{\frac{1}{2}\log(1+\gamma\mu)}\leq 1,
\]
where $\mu\triangleq \mathbb{E}[\min\{E_t,\bar{B}\}]$.
\end{corollary}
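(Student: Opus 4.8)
The plan is to obtain the corollary as a direct bookkeeping consequence of Proposition~\ref{prop:upperbound} and Theorem~\ref{thm:onlinePC}, with no new estimates. First note that since $\mathbb{E}[E_t]>0$ and $\bar B>0$, the quantity $\mu=\mathbb{E}[\min\{E_t,\bar B\}]$ is strictly positive, so $\tfrac12\log(1+\gamma\mu)>0$ and dividing by it is legitimate; there is no degenerate case to worry about. The upper bounds are then immediate: Proposition~\ref{prop:upperbound} already states $\Conline\leq\frac{1}{2}\log(1+\gamma\mu)$, which is the right-hand inequality of the additive chain, and dividing both sides by $\frac{1}{2}\log(1+\gamma\mu)$ gives $\Conline/(\tfrac12\log(1+\gamma\mu))\leq 1$, the right-hand inequality of the multiplicative chain.

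For the lower bounds I would invoke the Fixed Fraction Policy $\mathbf{g}$ with $g_t=qb_t$, $q=\mu/\bar B$. This policy is admissible (as observed right after its definition, since $q\in[0,1]$), so by the definition of $\Conline$ in~\eqref{eq:online_opt} as a supremum of $\liminf_{n\to\infty}\T_n(\cdot)$ over admissible policies,
\[
\Conline=\sup_{\mathbf{g}'\text{ admissible}}\liminf_{n\to\infty}\T_n(\mathbf{g}')\ \geq\ \liminf_{n\to\infty}\T_n(\mathbf{g}).
\]
Applying the first bound of Theorem~\ref{thm:onlinePC} to this particular $\mathbf{g}$ gives $\Conline\geq\frac{1}{2}\log(1+\gamma\mu)-0.72$, which together with the upper bound above completes the additive sandwich. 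Applying the second bound of Theorem~\ref{thm:onlinePC} gives $\Conline\geq\frac12\cdot\frac12\log(1+\gamma\mu)$, and dividing by $\frac12\log(1+\gamma\mu)>0$ yields $\Conline/(\tfrac12\log(1+\gamma\mu))\geq\frac12$, completing the multiplicative sandwich. Since Theorem~\ref{thm:onlinePC} and Proposition~\ref{prop:upperbound} hold for every i.i.d. non-negative $E_t$, every $\gamma>0$, and every $\bar B>0$, so does the corollary.

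I do not anticipate any real obstacle: the entire content of the corollary is already packaged in the preceding theorem and proposition, and the argument is simply ``the optimum dominates the performance of any fixed admissible policy, in particular the Fixed Fraction Policy.'' The only points requiring a line of care are the positivity of $\mu$ (so that the multiplicative ratio is well defined) and the explicit use of the supremum definition of $\Conline$ to pass from the policy's guaranteed throughput to a bound on the optimum.
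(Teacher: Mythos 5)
Your proposal is correct and matches the paper exactly: the paper presents this as an ``immediate corollary'' of Proposition~\ref{prop:upperbound} and Theorem~\ref{thm:onlinePC}, obtained precisely by noting that $\Conline$ dominates the Fixed Fraction Policy's throughput and combining the two bounds. Your added care about the positivity of $\mu$ for the multiplicative ratio is a reasonable (if implicit in the paper) detail.
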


This corollary gives a simple approximation of how the optimal throughput depends on the energy harvesting process $E_t$ and the battery size $\bar{B}$. This characterization identifies two fundamentally different operating regimes for this channel where the dependence of the average throughput on $E_t$ and $\bar{B}$ is qualitatively different. Assume that $E_t$ takes values in the interval $[0, \bar{E}]$. When $\bar{B}\geq \bar{E}$, we have
\begin{equation}\label{eq:mainres1}
\Conline\approx \frac{1}{2}\log\left(1+\gamma\mathbb{E}[E_t]\right)\qquad\text{bits/s/Hz},
\end{equation}
and the throughput is approximately equal to the capacity of an AWGN channel with an average power constraint equal to the average energy harvesting rate. This is surprising given that the transmitter is limited by the additional constraints~\eqref{eq:power} and~\eqref{eq:battery},  and at finite $\bar{B}$ this can lead to part of the harvested energy being wasted due to an overflow in the battery capacity.  Note that in this large battery regime, the throughput depends only on the mean of the energy harvesting process -- two energy harvesting profiles are equivalent as long as they provide the same energy on the average -- and is also independent of the battery size $\bar{B}$. In particular, choosing $\bar{B}\approx \bar{E}$ is almost sufficient to achieve the throughput at infinite battery size.\looseness=-1

\begin{figure}[t]
\centering
\subfloat[$\bar{B}\geq\bar{E}$]{
\begin{tikzpicture}
\tikzstyle{every node}=[font=\scriptsize];

\draw[->] (0,0) -- (3,0);
\draw[->] (0,0) -- (0,2);
\node[below] at (0,0) {$0$};
\node[above] at (0,2) {$f_{\tilde{E}}(x)$};
\node[right] at (3,0) {$x$};

\draw[red,thick] (0,0.75) to[out=0,in=180] (0.75,1.25);
\draw[red,thick] (0.75,1.25) to[out=0,in=135] (1.75,0);
\node[below] at (1.75,-0.1) {$\bar{E}$};
\draw (1.75,0) -- (1.75,-0.1);

\draw[dashed,thick] (2.5,0) -- (2.5,2);
\node[below] at (2.5,-0.1) {$\bar{B}$};
\draw (2.5,0) -- (2.5,-0.1);

\end{tikzpicture}
\label{subfig:large_battery_regime}
}
\ 
\subfloat[$\bar{B}\leq\bar{E}$]{
\begin{tikzpicture}

\tikzstyle{every node}=[font=\scriptsize];

\draw[->] (0,0) -- (3,0);
\draw[->] (0,0) -- (0,2);
\node[below] at (0,0) {$0$};
\node[above] at (0,2) {$f_{\tilde{E}}(x)$};
\node[right] at (3,0) {$x$};

\draw[red,thick] (0,0.75) to[out=0,in=180] (0.75,1.25);
\draw[red,thick] (0.75,1.25) to[out=0,in=140] (1.1,1.12);

\node[below] at (1.1,-0.1) {$\bar{B}$};
\draw (1.1,-0.1) -- (1.1,0);
\draw[->,red,thick,>=latex] (1.1,0) -- (1.1,1.5);

\end{tikzpicture}
\label{subfig:small_battery_regime}
}
\caption{$\tilde{E}_t=\min\{E_t,\bar{B}\}$ in the two battery regimes.}
\label{fig:two_battery_regimes}
\end{figure}
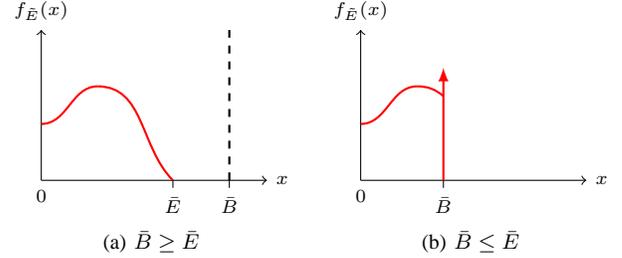

When $\bar{B}\leq \bar{E}$, note that one can equivalently consider the distribution of $E_t$ to be that in Fig.~\ref{fig:two_battery_regimes}-\subref{subfig:small_battery_regime}: since every energy arrival with value $E_t\geq\bar{B}$ fully recharges the battery, this creates a point mass at $\bar{B}$ with value $\Pr(E_t\geq\bar{B})$. In this case, Corollary~\ref{corollary} reveals that the throughput is approximately determined by the mean of this modified distribution. This can be interpreted as the small battery regime of the channel. In particular, in this regime the achievable throughput depends both on the shape of the distribution of $E_t$ and the value of~$\bar{B}$.

\section{Numerical Results}
\label{sec:conclusion}

We compare the Fixed Fraction Policy to two additional policies that have been studied in the literature: The greedy policy~\cite{Wang15}, i.e. $g_t=b_t$, and the ``constant'' policy \cite{Ozeletal2011},
in which $g_t=\mu\cdot\mathbf{1}\{b_t\geq\mu\}$, where $\mu=\mathbb{E}[\min\{E_t,\bar{B}\}]$. The latter attempts to transmit at a constant power $\mu$, and if there is not enough energy in the battery, it simply waits until the battery is recharged again to a level at least $\mu$.\footnote{Reference~\cite{online_infiniteB4} considers a refined version of this policy where the allocated energy is either $\mu+\delta$ or $\mu-\delta$ where $\delta=\beta\sigma^2\frac{\log\bar{B}}{\bar{B}}$ for some constant $\beta\geq2$ and $\sigma^2$ is the variance of the energy arrival distribution.
They show this policy is asymptotically optimal if $\bar{B}\to\infty$ and $\sigma^2$ is finite, in which case $\delta\to0$ and the strategy approaches the constant strategy. However, this policy is not applicable for all finite values of $\bar{B}$. For example, consider i.i.d. energy arrivals uniformly distributed on $[0,\bar{B}]$. Then $\sigma^2=\frac{1}{12}\bar{B}^2$, which yields $\delta=\frac{\beta}{12}\bar{B}\log\bar{B}$. Observe that for $\bar{B}<1$ we get $\delta<0$, and for $\bar{B}>e^{6/\beta}$ we get $\delta>\mu$, where $\mu=\bar{B}/2$. For $\beta=2$, which is the minimal value of $\beta$ according to~\cite{online_infiniteB4}, the policy is only applicable for $1\leq\bar{B}\leq 20.1$.
Therefore we do not include this policy in our numerical evaluations.
}

All policies are compared to the optimal throughput $\Conline$ obtained by dynamic programming via value iteration (See Section~\ref{sec:preliminary_discussion}).
Figures \ref{fig:sim_bernoulli0.1}, \ref{fig:sim_bernoulli0.5} and \ref{fig:sim_bernoulli0.9} depict the 
performance of these policies
when the energy distribution is Bernoulli with $p=0.1$, $p=0.5$ and $p=0.9$ respectively.  
The plots are, from left to right: the achievable throughput, the additive gap to optimality, and the ratio between the throughput and the optimal one.
The figures show the absolute attainable throughput using these policies, as well as the gap to optimality as measured by $\Conline-\liminf_{n\to\infty}\T_n(\mathbf{g})$ and the ratio $\frac{\liminf_{n\to\infty}\T_n(\mathbf{g})}{\Conline}$ between the suboptimal policy and the optimal one, where $\mathbf{g}$ is any of the policies mentioned above.
Note that for the Fixed Fraction Policy, the gap to optimality remains small in all cases, while the gap to optimality grows unboundedly for the greedy and the constant policy as $\bar{B}$ increases. In particular, note that although the greedy policy is (not surprisingly) the best of all strategies at small $\bar{B}$ values, which correspond to low-SNR, its performance is not universally good across all SNR regimes. This can be also observed by noting that for Bernoulli arrivals the throughput achieved by the greedy policy is given by $p\frac{1}{2}\log(1+\bar{B})$ while the optimal throughput is $\frac{1}{2}\log(1+p\bar{B})$ within $0.72$ bits/channel use as shown in Proposition~\ref{prop:Bernoulli_gap}. Obviously, when $p$ is small and $\bar{B}$ is large the gap between the two expressions can be arbitrarily large. Also, note that as $p$ increases, the gap to optimality decreases for all strategies. This is not surprising as when $p\rightarrow 1$, the Bernoulli distribution approaches a constant equal to $\mu=\bar{B}$. In this trivial case, all three policies reduce to the optimal policy that always allocates energy equal to $\mu$. However note that even with $p=0.9$, the fixed fraction policy is still able to provide gain over the other strategies.  

\def \smallfigwidth {12.5pc}
\begin{figure*}
\centering
\includegraphics[width=\smallfigwidth]{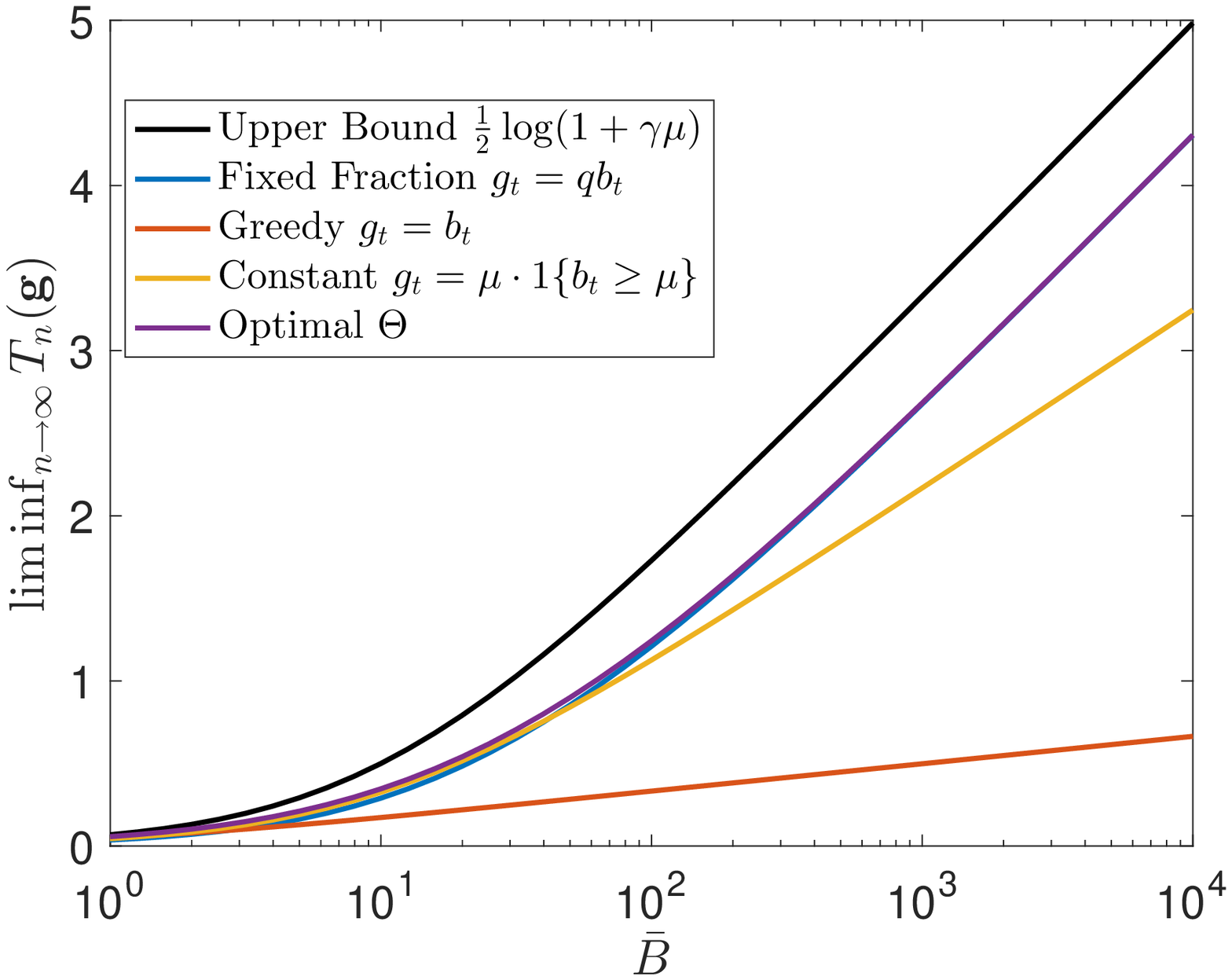}
\includegraphics[width=\smallfigwidth]{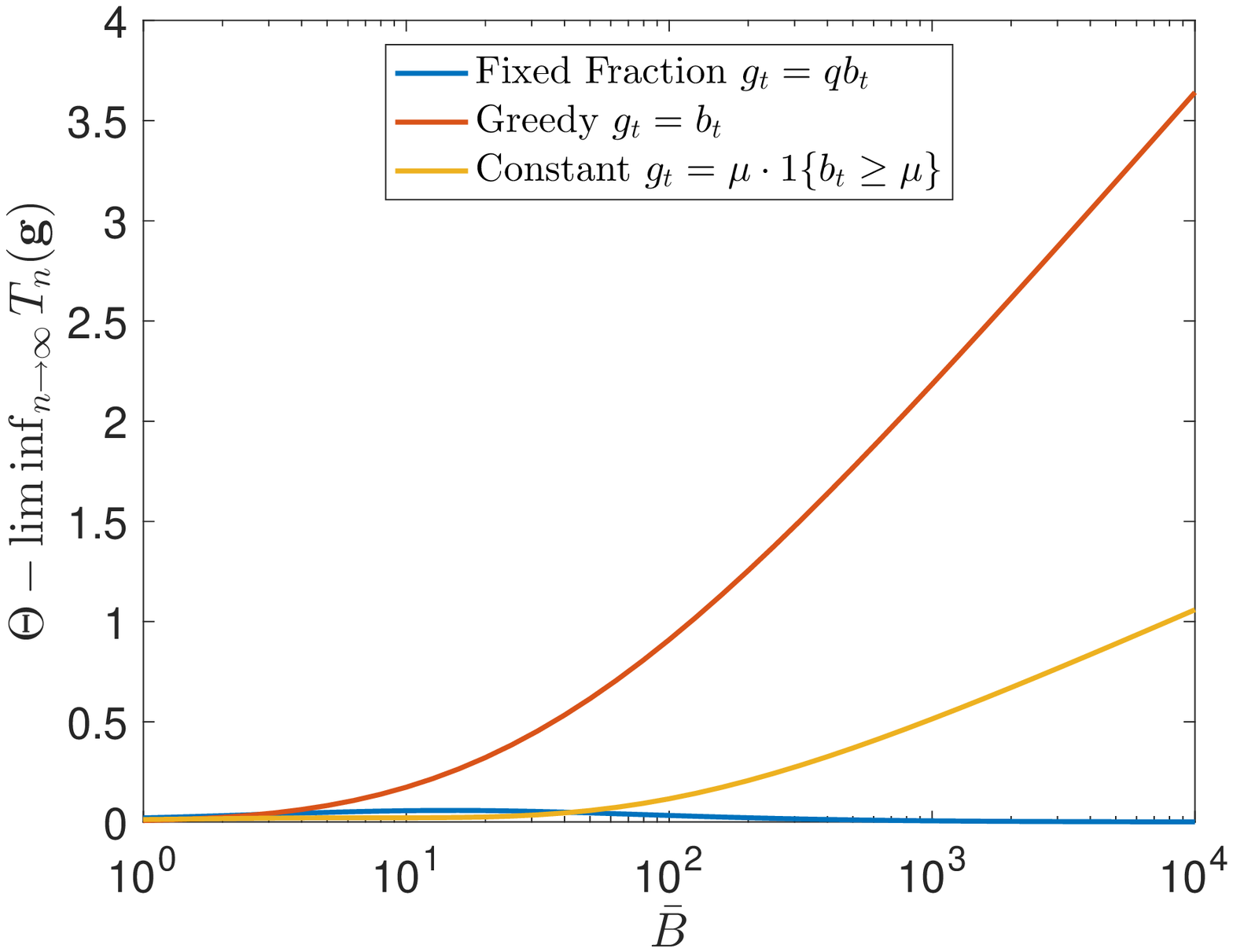}
\includegraphics[width=\smallfigwidth]{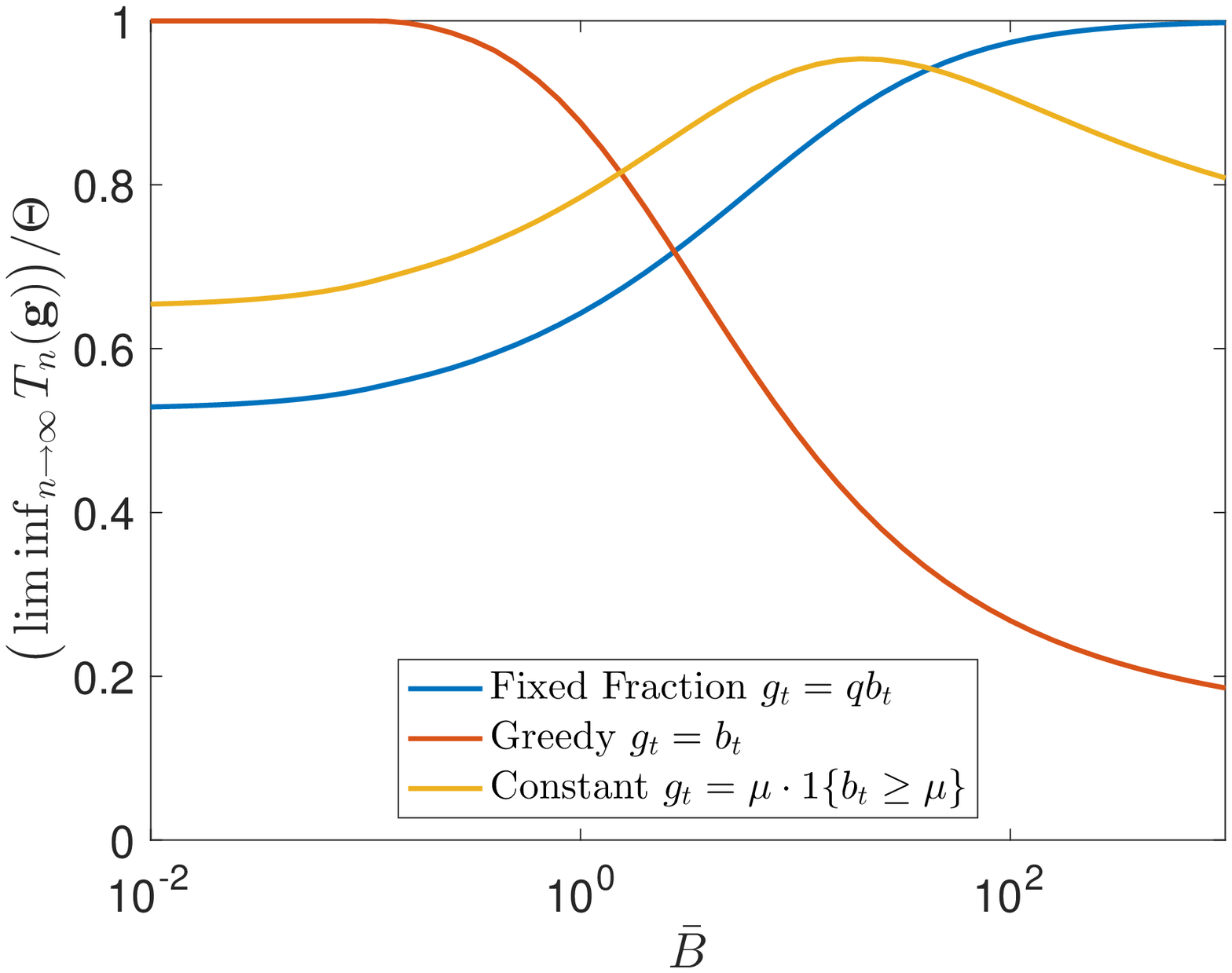}
\caption{Plots of the Fixed Fraction, Greedy, and Constant policies, for $\gamma=1$ and $E_t\sim \text{Bernoulli}(0.1)$, where $E_t\in\{0,\bar{B}\}$.}
\label{fig:sim_bernoulli0.1}
\vspace{2em}
\includegraphics[width=\smallfigwidth]{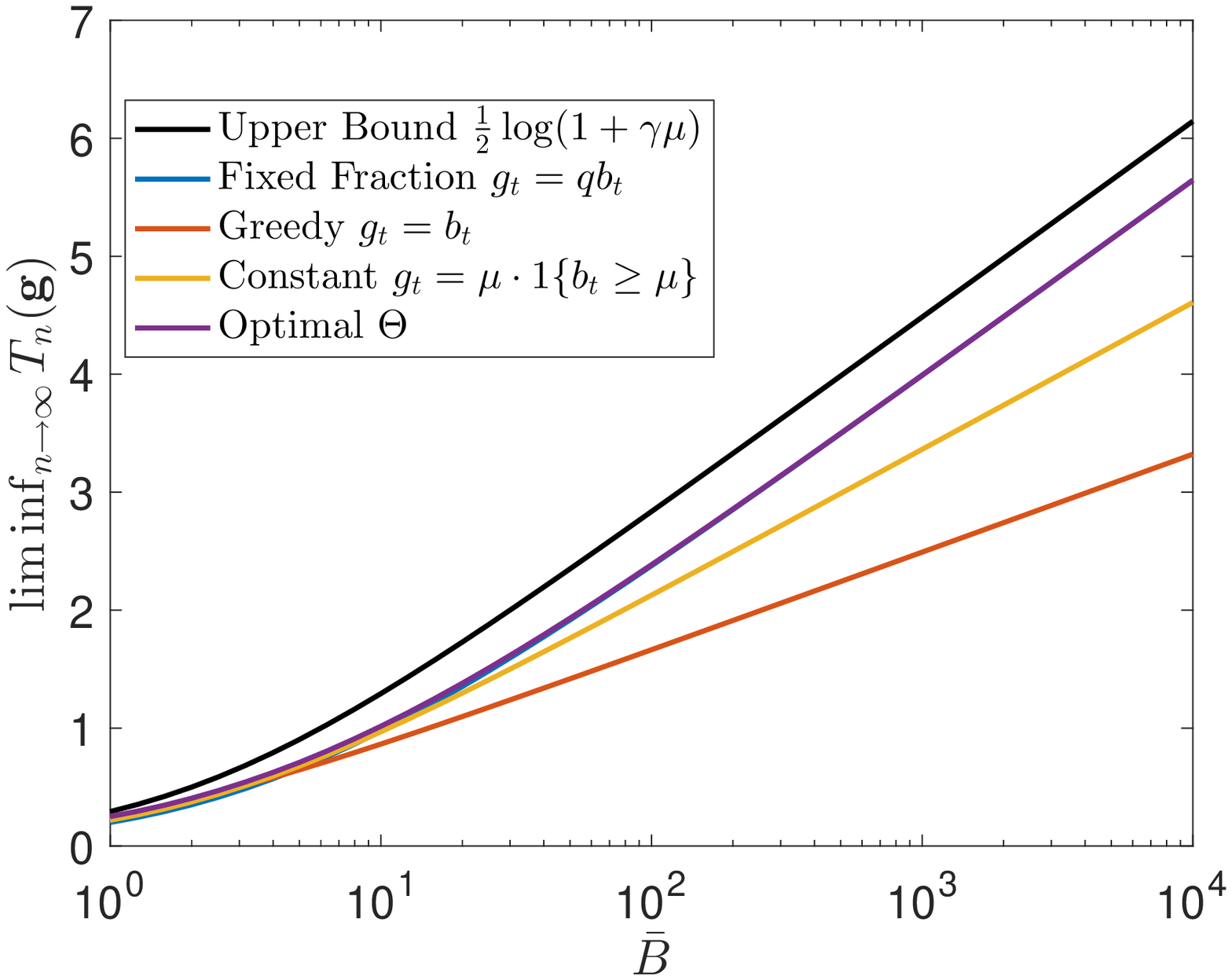}
\includegraphics[width=\smallfigwidth]{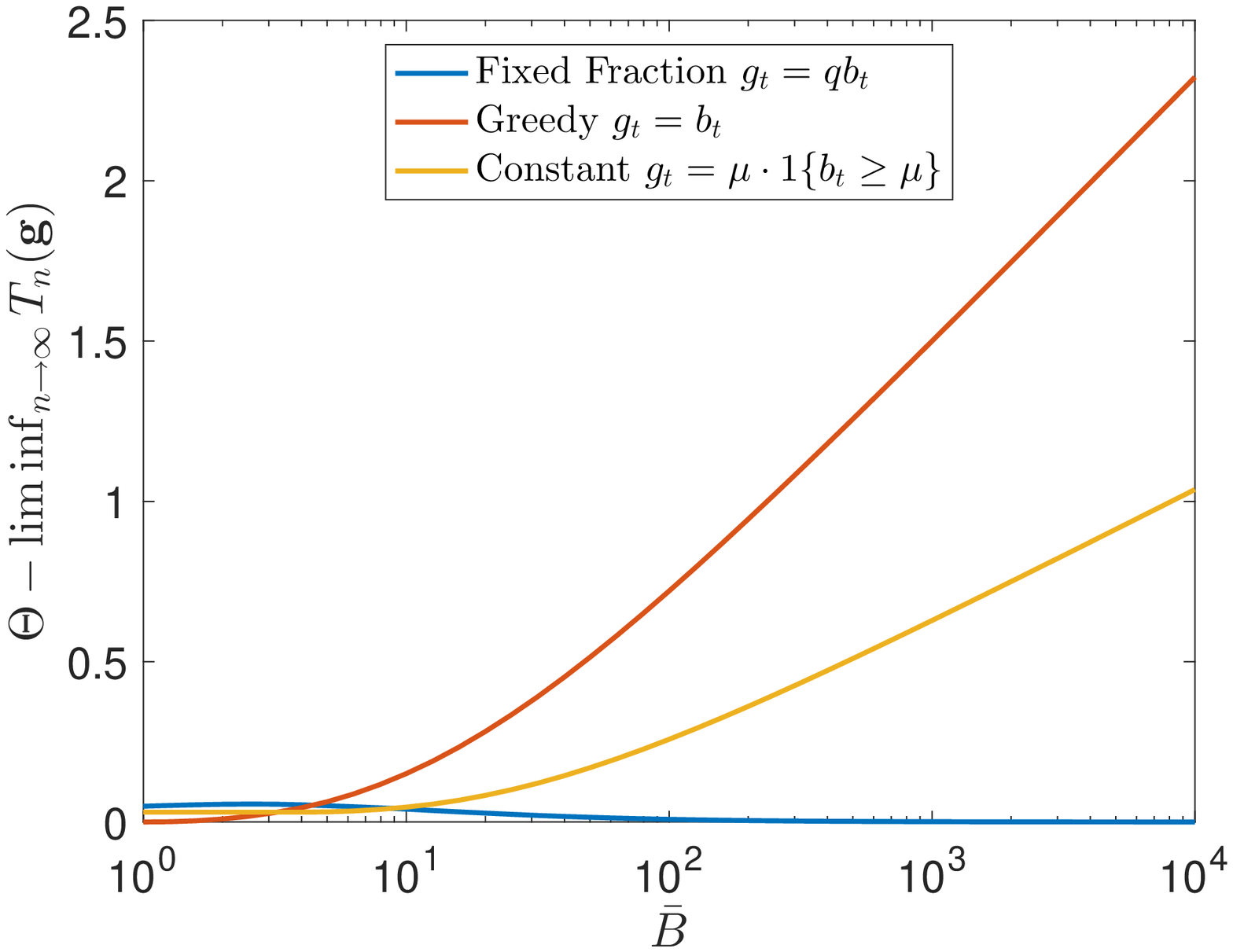}
\includegraphics[width=\smallfigwidth]{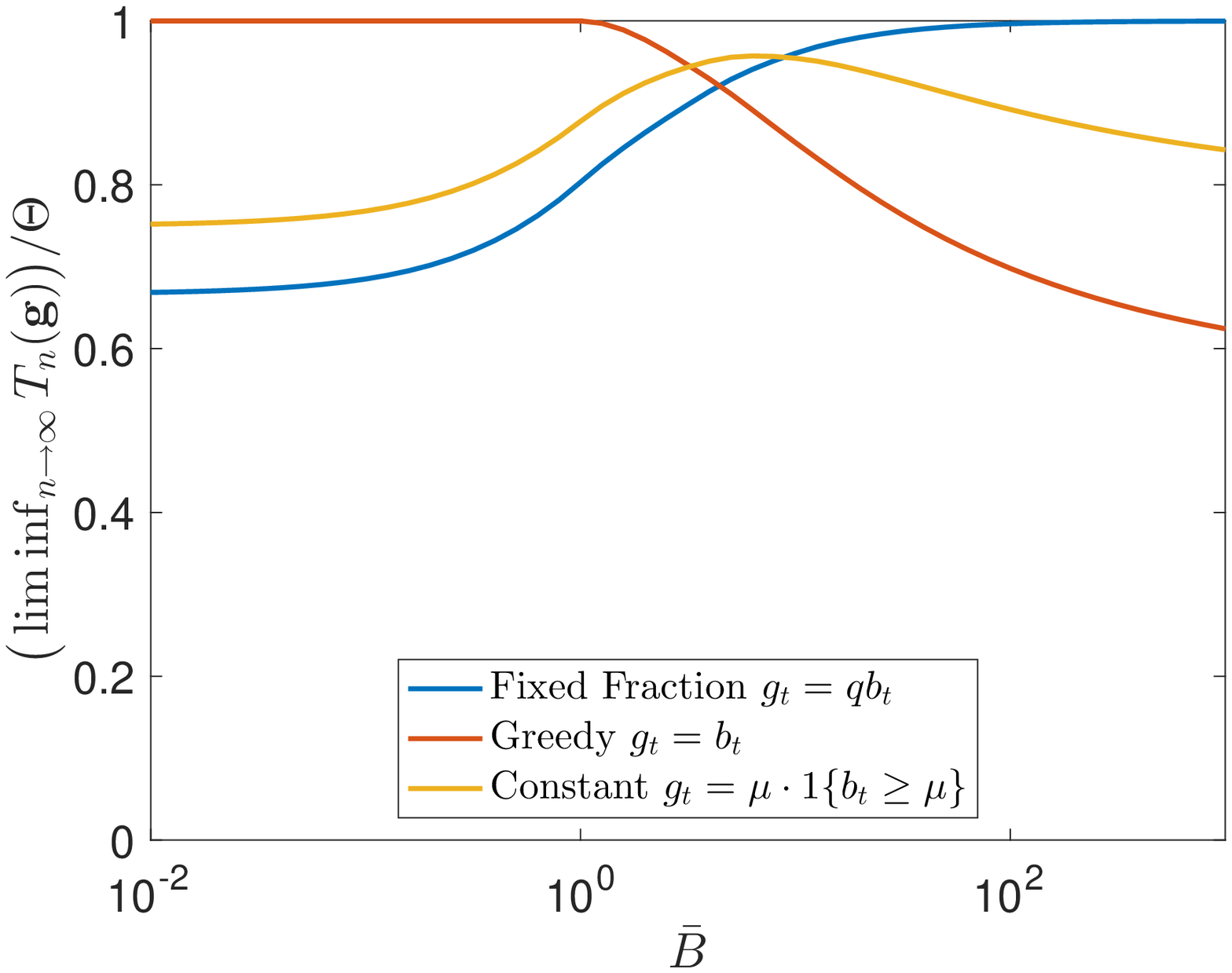}
\caption{Plots of the Fixed Fraction, Greedy, and Constant policies, for $\gamma=1$ and $E_t\sim \text{Bernoulli}(0.5)$, where $E_t\in\{0,\bar{B}\}$.}
\label{fig:sim_bernoulli0.5}
\vspace{2em}
\includegraphics[width=\smallfigwidth]{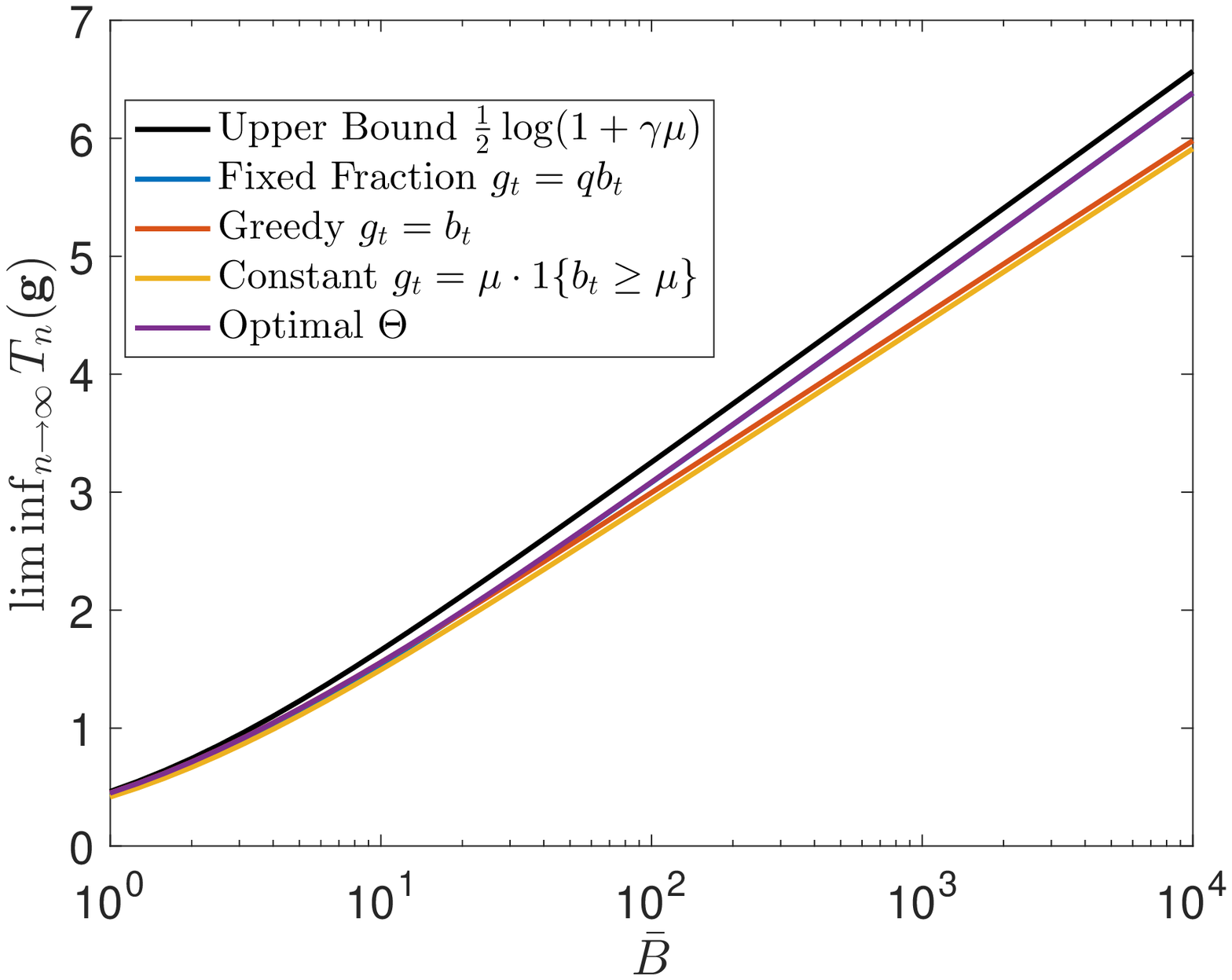}
\includegraphics[width=\smallfigwidth]{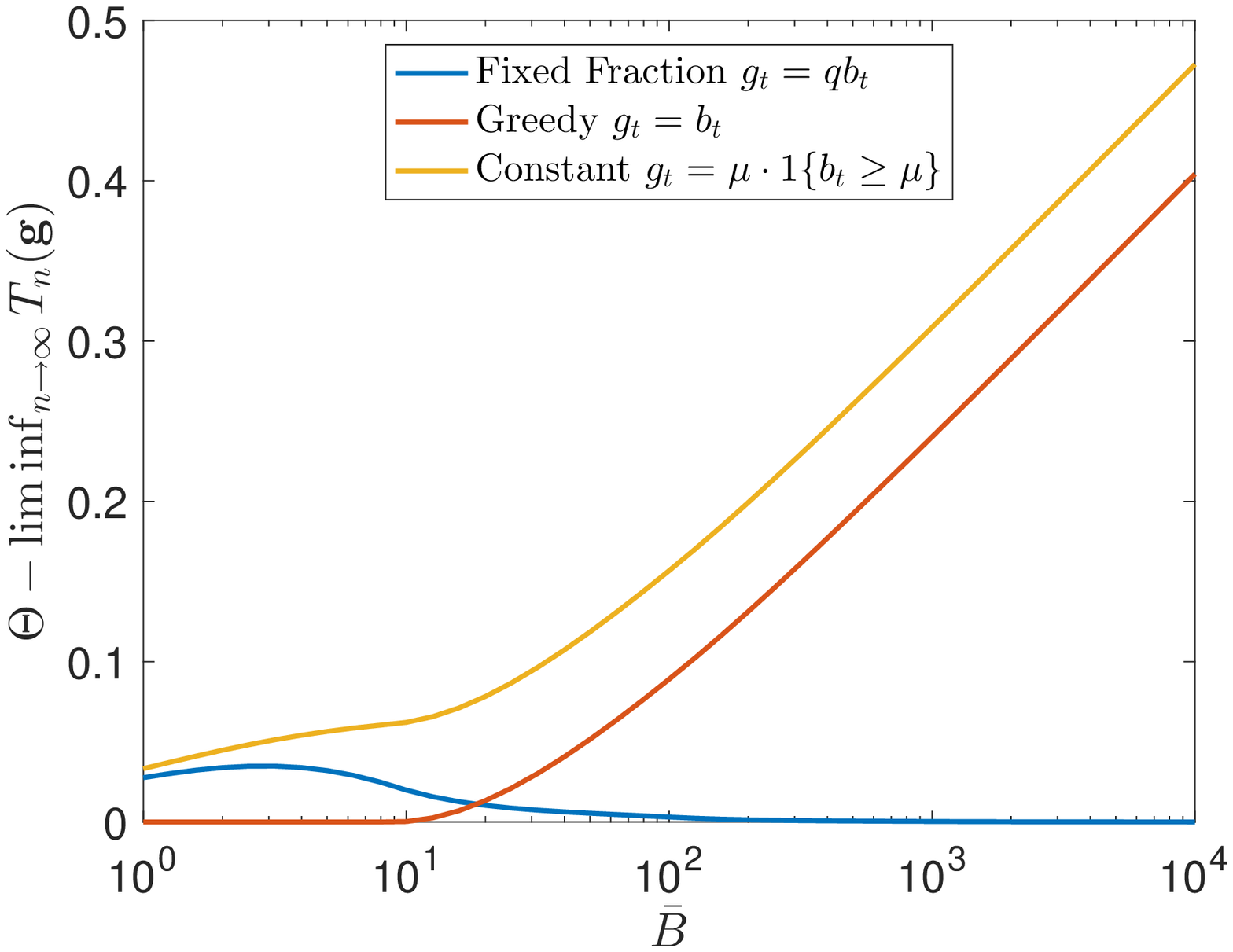}
\includegraphics[width=\smallfigwidth]{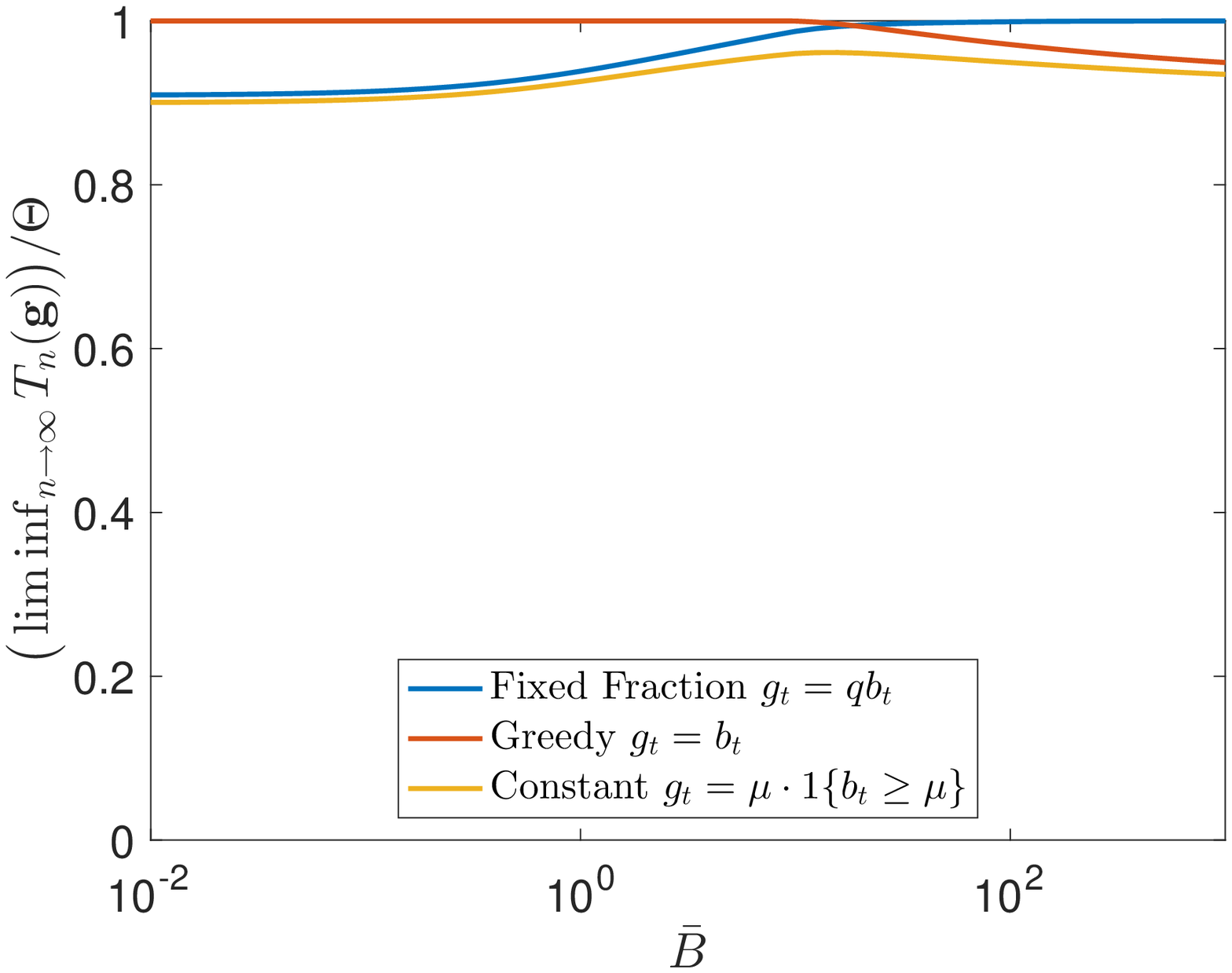}
\caption{Plots of the Fixed Fraction, Greedy, and Constant policies, for $\gamma=1$ and $E_t\sim \text{Bernoulli}(0.9)$, where $E_t\in\{0,\bar{B}\}$.}
\label{fig:sim_bernoulli0.9}
\end{figure*}
\begin{figure*}
\centering
\includegraphics[width=\smallfigwidth]{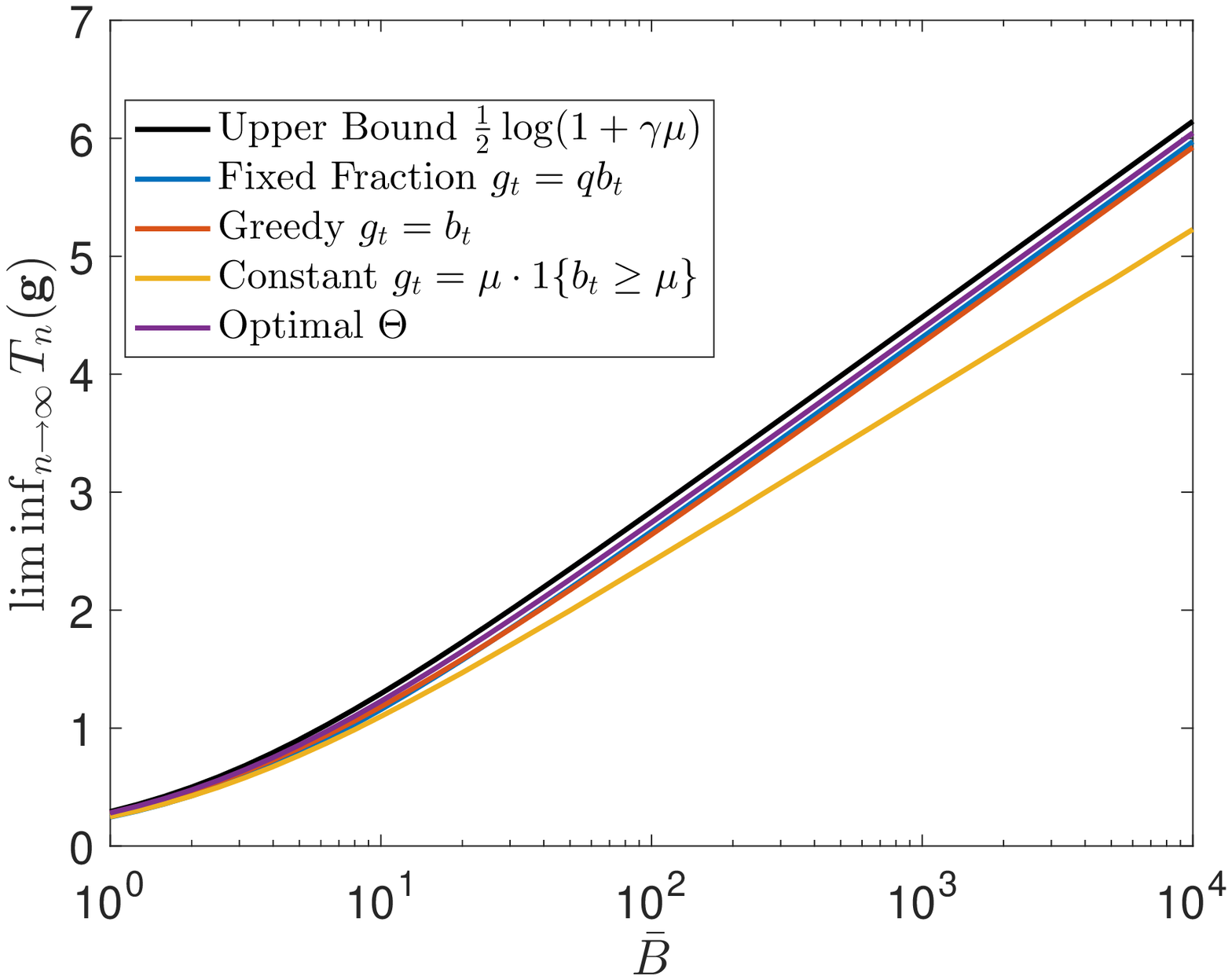}
\includegraphics[width=\smallfigwidth]{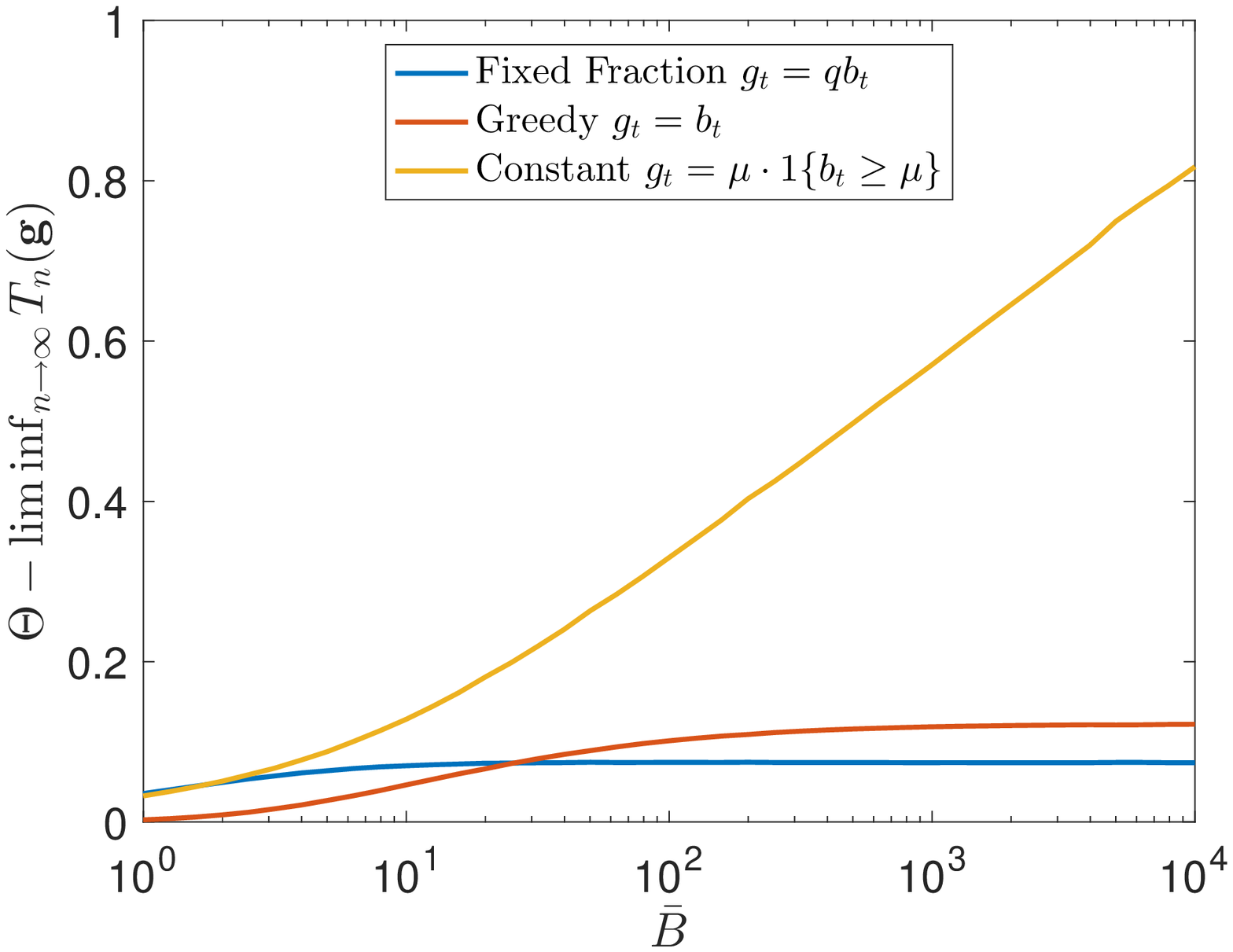}
\includegraphics[width=\smallfigwidth]{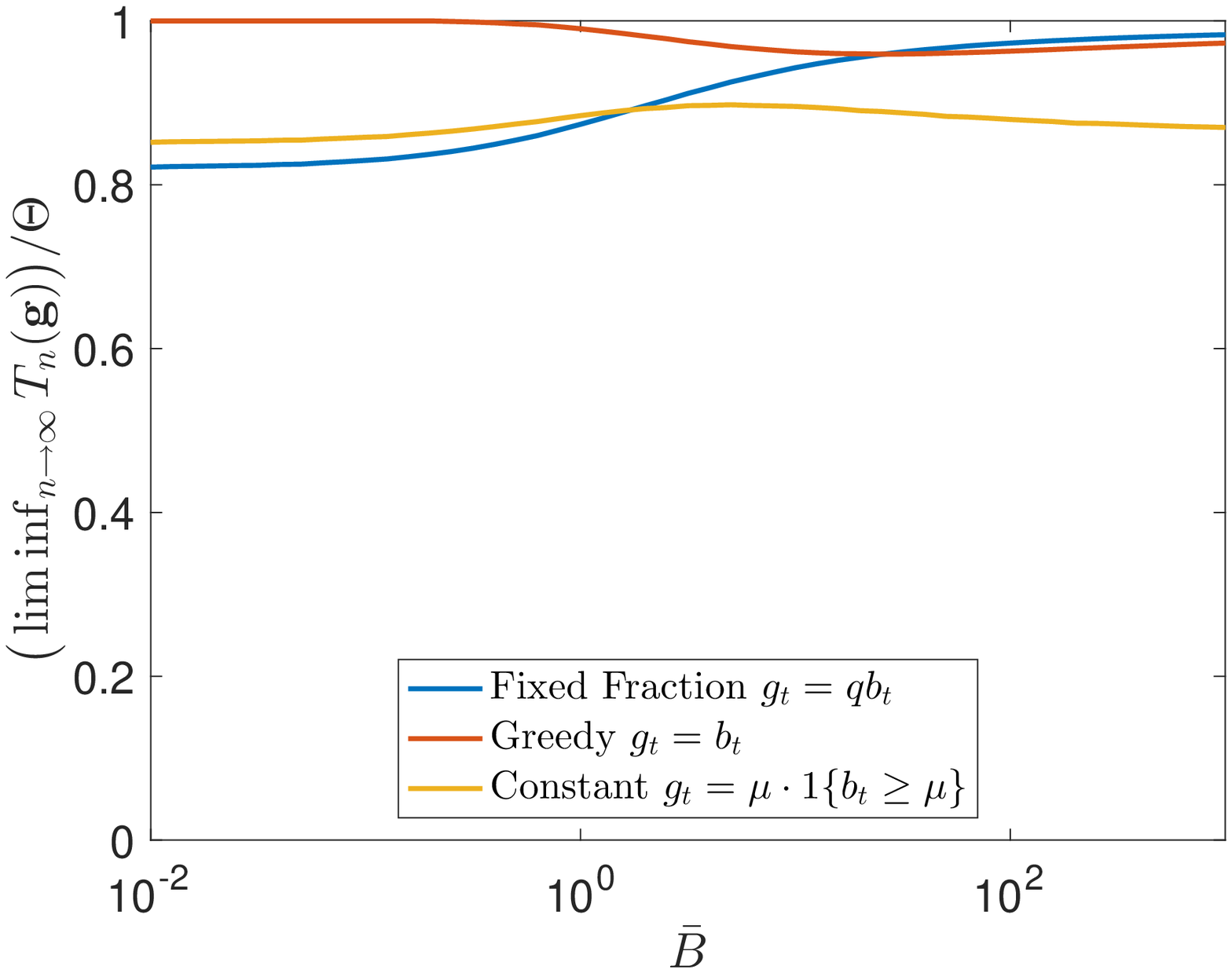}
\caption{Plots of the Fixed Fraction, Greedy, and Constant policies, for $\gamma=1$ and $E_t\sim \textrm{Unif}[0,\bar{B}]$.}
\label{fig:sim_uniform}
\vspace{2em}
\includegraphics[width=\smallfigwidth]{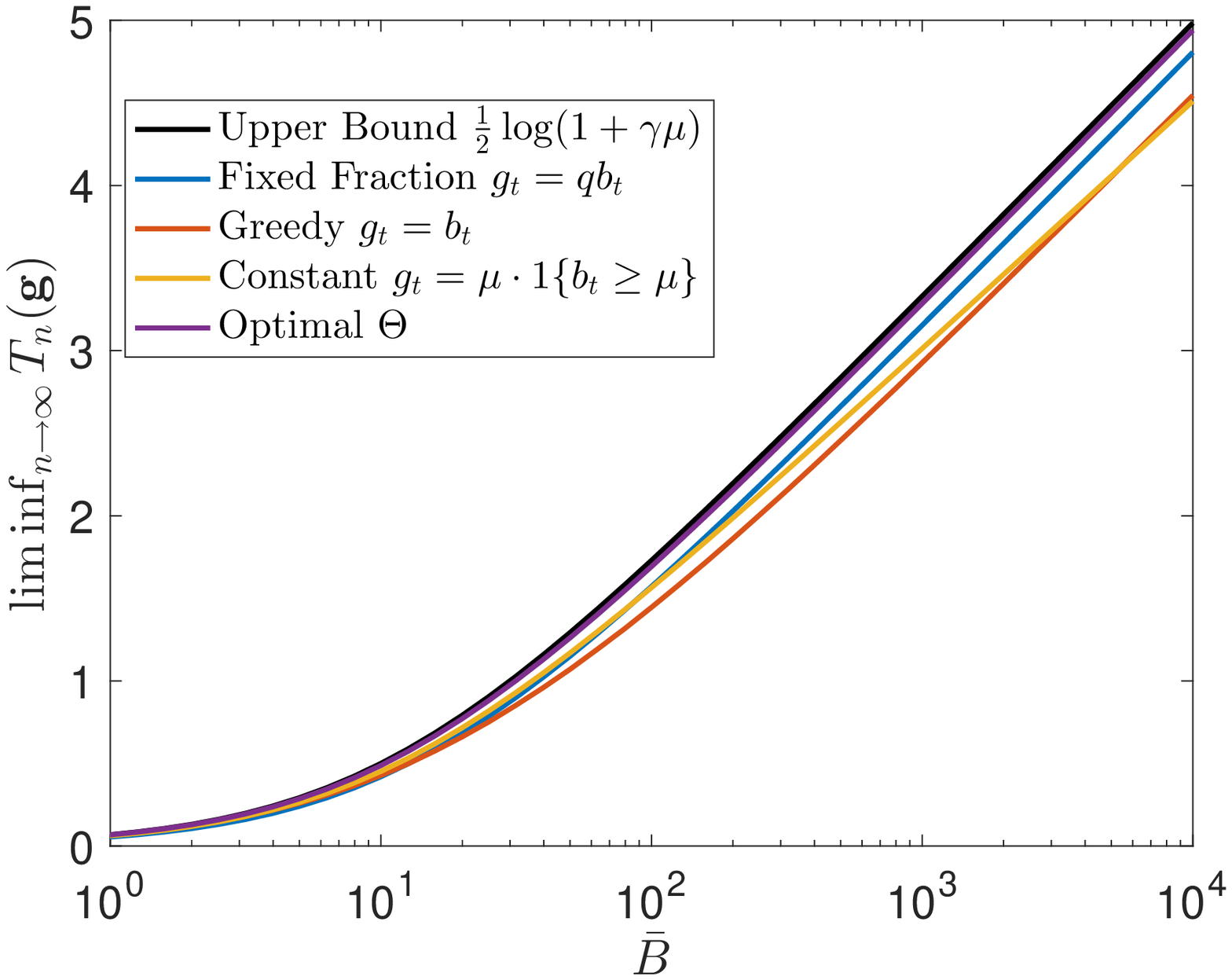}
\includegraphics[width=\smallfigwidth]{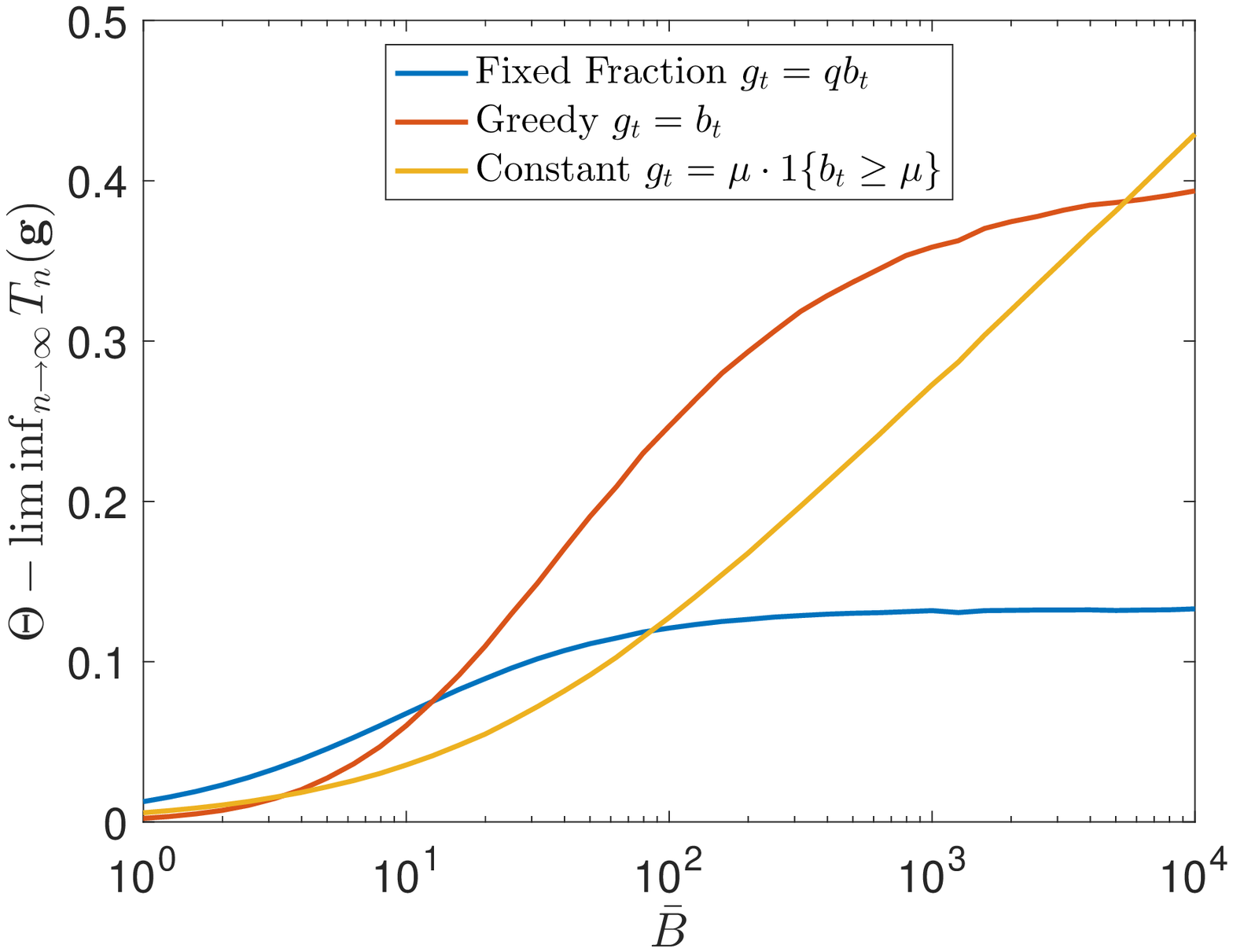}
\includegraphics[width=\smallfigwidth]{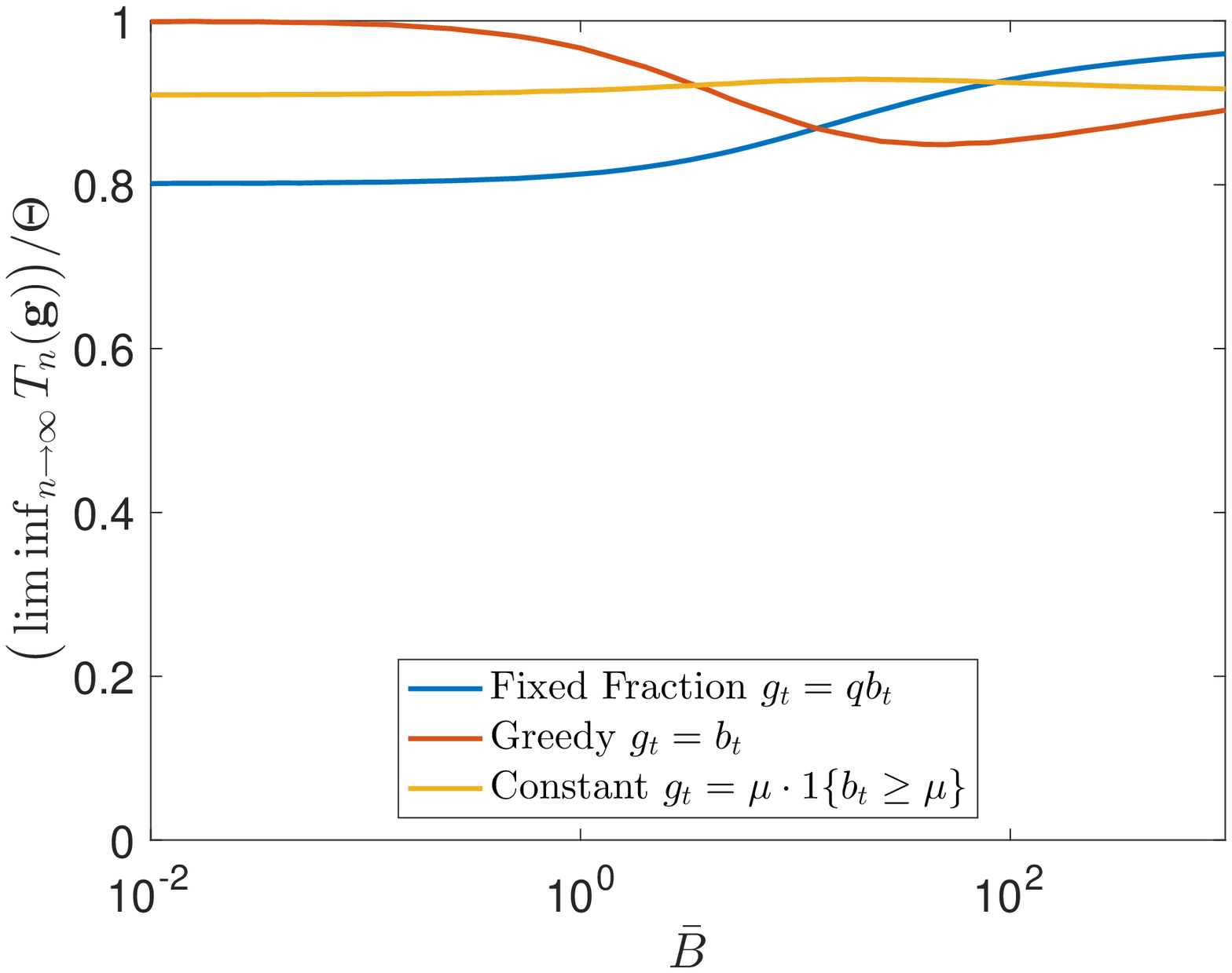}
\caption{Plots of the Fixed Fraction, Greedy, and Constant policies, for $\gamma=1$ and $E_t\sim \text{Exp}(\frac{1}{0.1\cdot\bar{B}})$.}
\label{fig:sim_exponential}
\end{figure*}
Figures~\ref{fig:sim_uniform} and~\ref{fig:sim_exponential} provide the corresponding plots for the uniform and the exponential distribution. The exponential distribution corresponds to an energy harvesting process arising from a Gaussian signal. Note that while the absolute gap to optimality for the three policies depend on the distribution, the general trends for the Bernoulli case prevail.

\section{Lower Bounding the Throughput Achieved by the Fixed Fraction Policy}
\label{sec:throughput_lower_bounds}

In this section we will prove the main theorem of the paper, namely Theorem \ref{thm:onlinePC}.
First, we will derive lower bounds on the throughput obtained by the Fixed Fraction Policy when the energy arrivals are i.i.d. Bernoulli. These are derived in Sections \ref{subsec:Bernoulli} and \ref{subsec:Bernoulli_mult}, in the form of an additive gap and a multiplicative gap from optimality, respectively.
Next, in Section \ref{subsec:general_energy_arrivals} we will show that the throughput of the Fixed Fraction Policy under \emph{any} i.i.d. energy arrival process is necessarily larger than under Bernoulli energy arrivals. This will imply that the lower bounds derived for the Bernoulli case, apply also to any harvesting process.

\subsection{Additive Gap for Bernoulli Energy Arrivals: Proof of Proposition~\ref{prop:Bernoulli_gap}}
\label{subsec:Bernoulli}

Before moving forward to establish the approximate optimality of this power control policy, we provide a few definitions and results from renewal theory.
\begin{definition}
\label{def:regenerative}
A stochastic process $\{X_t\}_{t=1}^{\infty}$ is called a \emph{non-delayed regenerative process} if there exists a random time $\tau>0$ such that the process $\{X_{\tau+t}\}_{t=1}^{\infty}$ has the same distribution as $\{X_t\}_{t=1}^{\infty}$ and is independent of the past $(\tau,X^{\tau})$.
\end{definition}
Observe that a regenerative process is composed of i.i.d. ``cycles'' or \emph{epochs}, which have i.i.d. durations $\tau_1,\tau_2,\ldots$. At the beginning of each epoch, the process ``regenerates'' and all memory of the past is essentially erased.
The following lemma establishes an important time-average property of regenerative processes.
\begin{lemma}[LLN for Regenerative Processes]
\label{lemma:SLLN_regenerative}
Let $\{X_t\}_{t=1}^{\infty}$, $X_t\in\mathcal{X}$, be a non-delayed regenerative process with associated epoch duration $\tau$, and let $f:\mathcal{X}\to\mathbb{R}$.
If $\mathbb{E}\tau<\infty$ and $\mathbb{E}[\sum_{t=1}^{\tau}|f(X_t)|]<\infty$ then:
\[
\lim_{n\to\infty}\frac{1}{n}\sum_{t=1}^{n}f(X_t)=\frac{1}{\mathbb{E}\tau}\mathbb{E}\left[\sum_{t=1}^{\tau}f(X_t)\right]
\quad\text{a.s.}
\]
\end{lemma}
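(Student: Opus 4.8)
The plan is to run the classical renewal–reward argument. First I would use Definition~\ref{def:regenerative} iteratively to build a sequence of regeneration times: set $S_0=0$, $S_1=\tau$, and apply the definition to the shifted process $\{X_{S_1+t}\}_{t\ge1}$ to obtain $S_2$, and so on. Since $X_t$ is indexed by positive integers, each cycle length is $\ge 1$, so this produces random times $0=S_0<S_1<S_2<\cdots$ with the property that the epoch durations $\tau_k\triangleq S_k-S_{k-1}$ are i.i.d.\ copies of $\tau$ and the cycle blocks $(X_{S_{k-1}+1},\ldots,X_{S_k})$ are i.i.d. Consequently the pairs $\big(\tau_k,\,Y_k\big)$, with $Y_k\triangleq\sum_{t=S_{k-1}+1}^{S_k}f(X_t)$, are i.i.d., and the hypotheses give $\mathbb{E}|Y_1|\le\mathbb{E}\big[\sum_{t=1}^{\tau}|f(X_t)|\big]<\infty$ and $\mathbb{E}\tau_1=\mathbb{E}\tau<\infty$.

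Next I would introduce the counting process $N(n)\triangleq\max\{k\ge0:S_k\le n\}$, the number of completed epochs by time $n$. Because $S_k\ge k$, we have $S_k\to\infty$ and hence $N(n)\to\infty$ a.s.; and since $S_k/k\to\mathbb{E}\tau$ a.s.\ by the strong law, sandwiching via $S_{N(n)}\le n<S_{N(n)+1}$ yields $N(n)/n\to 1/\mathbb{E}\tau$ a.s. I would then split
\[
\frac{1}{n}\sum_{t=1}^{n}f(X_t)=\frac{N(n)}{n}\cdot\frac{1}{N(n)}\sum_{k=1}^{N(n)}Y_k\;+\;\frac{1}{n}\sum_{t=S_{N(n)}+1}^{n}f(X_t).
\]
For the first term, applying the strong law to $\{Y_k\}$ (legitimate since $\mathbb{E}|Y_1|<\infty$ and $N(n)\to\infty$) gives $\frac{1}{N(n)}\sum_{k=1}^{N(n)}Y_k\to\mathbb{E}Y_1$, so the first term converges a.s.\ to $\frac{1}{\mathbb{E}\tau}\mathbb{E}\big[\sum_{t=1}^{\tau}f(X_t)\big]$, the asserted limit.

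It then remains to show the boundary term vanishes, which is the only point needing care. I would bound it by the whole last (incomplete) cycle: $\big|\frac{1}{n}\sum_{t=S_{N(n)}+1}^{n}f(X_t)\big|\le\frac{1}{n}W_{N(n)+1}$, where $W_k\triangleq\sum_{t=S_{k-1}+1}^{S_k}|f(X_t)|$ are i.i.d.\ with finite mean. Since a finite-mean i.i.d.\ sequence satisfies $W_m/m\to0$ a.s.\ (obtained by subtracting consecutive partial-sum averages, each of which converges to $\mathbb{E}W_1$), and $(N(n)+1)/n\to1/\mathbb{E}\tau<\infty$ with $N(n)+1\to\infty$, we get $\frac{1}{n}W_{N(n)+1}=\frac{W_{N(n)+1}}{N(n)+1}\cdot\frac{N(n)+1}{n}\to0$ a.s. Combining the two terms gives the claim. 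The main obstacle, such as it is, is precisely this control of the incomplete final epoch; everything else is a routine combination of the strong law with the renewal-counting asymptotics.
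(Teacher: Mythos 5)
Your proof is correct and is the standard renewal--reward argument; the paper itself gives no proof of this lemma, citing it as an immediate consequence of the renewal reward theorem (Asmussen, Ch.~VI, Thm.~3.1, or Ross, Prop.~7.3), and your write-up is precisely the expected self-contained derivation of that cited result, including the right treatment of the incomplete final epoch via $W_m/m\to0$. One small point to tighten: $N(n)\to\infty$ a.s.\ follows because each $S_k$ is a.s.\ finite (a consequence of $\mathbb{E}\tau<\infty$), not from $S_k\ge k$, which instead guarantees that $N(n)$ is well defined and at most $n$.
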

This is an immediate consequence of Theorem 3.1 in~\cite[Ch.~VI]{asmussen2008applied} or of the renewal reward theorem~\cite[Prop.~7.3]{ross2014introduction}.

We now return to our throughput optimization problem with Bernoulli energy arrivals.
Denote by $L$ the random time between two consecutive energy arrivals, or the length of an \emph{epoch}. Evidently, $L\sim\mathrm{Geometric}(p)$. That is,
\[
\Pr(L=k)=p(1-p)^{k-1}
\qquad,k=1,2,\ldots
\]
Recall that we assume, without loss of generality, that $b_1=\bar{B}$
(It is shown in Appendix~\ref{sec:initial_battery_state} that
the initial battery level is irrelevant to the long-term average throughput
 -- this follows from the fact that we can always wait until the battery recharges to $\bar{B}$ before starting transmission, with a vanishing penalty to the average throughput).

Equipped with Lemma~\ref{lemma:SLLN_regenerative}, we consider the Fixed Fraction Policy $\mathbf{g}$ for the Bernoulli case (see Section~\ref{sec:bernoulli_energy_arrivals}). Note that in this case this policy reduces to \eqref{Bernoulliform1} (or equivalently~\eqref{Bernoulliform2}).
Observe that $g_t(E^t)$ is a non-delayed regenerative process with epoch duration $L$. 
We apply Lemma~\ref{lemma:SLLN_regenerative} with $f(x)=\frac{1}{2}\log(1+x)$.
Note that $\mathbb{E}L=1/p<\infty$ and 
\[\mathbb{E}\Big[\sum_{t=1}^{L}|\tfrac{1}{2}\log(1+g_t(E^t)|\Big]\leq\mathbb{E}[L\cdot\tfrac{1}{2}\log(1+\bar{B})]<\infty,\] so the conditions of the lemma are satisfied. We obtain
\begin{align}
&\lim_{n\to\infty}\frac{1}{n}\sum_{t=1}^{n}\frac{1}{2}\log\big(1+g_t(E^t)\big)\nonumber\\*
&\hspace{4em}
=\frac{1}{\mathbb{E}L}\mathbb{E}\left[\sum_{t=1}^{L}\frac{1}{2}\log\big(1+g_t(E^t)\big)\right]\quad\text{a.s.}
\label{eq:SLLN_power_control_policy}
\end{align}

We proceed to lower bound the average throughput obtained by our suggested power control policy:
\begin{align}
\hspace{1em}\lefteqn{\hspace{-1em}\liminf_{n\to\infty}\T_n(\mathbf{g})}\nonumber\\*
&=\liminf_{n\to\infty}\frac{1}{n}\sum_{t=1}^n \mathbb{E}\left[\frac{1}{2}\log (1+\gamma g_t(E^t))\right]\nonumber\\*
&\overset{\text{(i)}}{\geq}\mathbb{E}\left[\liminf_{n\to\infty}\frac{1}{n}\sum_{t=1}^{n}\frac{1}{2}\log(1+\gamma g_t(E^t))\right]\nonumber\\
&\overset{\text{(ii)}}{=}\mathbb{E}\left[\frac{1}{\mathbb{E}L}\mathbb{E}\left[\sum_{t=1}^{L}\frac{1}{2}\log(1+\gamma g_t(E^t))\right]\right]\nonumber\\
&=\frac{1}{\mathbb{E}L}\mathbb{E}\left[\sum_{t=1}^{L}\frac{1}{2}\log(1+\gamma g_t(E^t))\right]\nonumber\\
&\overset{\text{(iii)}}{=}\frac{1}{\mathbb{E}L}\mathbb{E}\left[\sum_{i=1}^{L}\frac{1}{2}\log(1+\gamma \bar{B}p(1-p)^{i-1})\right]\label{eq:Bernoulli_scheme_exact_TP}\\
&\overset{\text{(iv)}}{\geq}\frac{1}{\mathbb{E}L}\mathbb{E}\left[\sum_{i=1}^{L}\left(\frac{1}{2}\log(1+\gamma p\bar{B})+(i-1)\frac{1}{2}\log(1-p)\right)\right]\nonumber\\
&=\frac{1}{\mathbb{E}L}\mathbb{E}\left[L\frac{1}{2}\log(1+\gamma p\bar{B})+\frac{L(L-1)}{2}\frac{1}{2}\log(1-p)\right]\nonumber\\
&=\frac{1}{2}\log(1+\gamma p\bar{B})-\frac{1}{4}\left(\frac{\mathbb{E}[L^2]}{\mathbb{E}L}-1\right)\log\left(\frac{1}{1-p}\right)\nonumber\\
&\overset{\text{(v)}}{=}\frac{1}{2}\log(1+\gamma p\bar{B})-\frac{1-p}{2p}\log\left(\frac{1}{1-p}\right),
\label{eq:Bernoulli_first_lower_bound}
\end{align}
where (i) is by Fatou's lemma~\cite[Theorem~1.5.4]{durrett2010probability}; (ii) is due to~\eqref{eq:SLLN_power_control_policy}; (iii) is by definition of the Fixed Fraction policy; (iv) is due to the inequality $\log(1+\alpha x)\geq\log(1+x)+\log\alpha$ for $0<\alpha\leq 1$;
and (v) is because $L\sim\mathrm{Geometric}(p)$.

The second term in \eqref{eq:Bernoulli_first_lower_bound} achieves its maximum when $p\to0$, in which case it is given by $\frac{1}{2\ln 2}\approx 0.72$. We conclude that for Bernoulli energy arrivals:
\begin{equation}
\label{eq:Bernoulli_lower_bound}
\liminf_{n\to\infty}\T_n(\mathbf{g})
\geq\frac{1}{2}\log(1+\gamma \mu)-\frac{1}{2\ln 2},
\end{equation}
where $\mu=p\bar{B}$ is the average energy arrival rate of the Bernoulli process.\qed


\subsection{Multiplicative Gap for Bernoulli Energy Arrivals: Proof of Proposition~\ref{prop:Bernoulli_mult}}
\label{subsec:Bernoulli_mult}

We start from \eqref{eq:Bernoulli_scheme_exact_TP}, which was derived in the previous section:
\begin{align}
\hspace{2em}\lefteqn{\hspace{-2em}\liminf_{n\to\infty}\T_n(\mathbf{g})}\nonumber\\*
&\geq\frac{1}{\mathbb{E}L}\mathbb{E}\left[\sum_{i=1}^{L}\frac{1}{2}\log(1+\gamma \bar{B}p(1-p)^{i-1})\right]\nonumber\\
&\overset{\text{(i)}}{\geq}\frac{1}{\mathbb{E}L}\mathbb{E}
	\left[\sum_{i=1}^{L}(1-p)^{i-1}\frac{1}{2}
	\log(1+\gamma\bar{B}p)\right]\nonumber\\
&\overset{\text{(ii)}}{=}p\sum_{k=1}^{\infty}p(1-p)^{k-1}
	\sum_{i=1}^{k}(1-p)^{i-1}\frac{1}{2}
	\log(1+\gamma\bar{B}p)\nonumber\\
&=\sum_{k=1}^{\infty}p^2(1-p)^{k-1}\frac{1-(1-p)^k}{p}
	\frac{1}{2}\log(1+\gamma\bar{B}p)\nonumber\\
&=\frac{1}{2-p}\cdot\frac{1}{2}
	\log(1+\gamma\bar{B}p)\nonumber\\
&\overset{\text{(iii)}}{\geq}\frac{1}{2}\cdot
	\frac{1}{2}\log(1+\gamma\mu),\label{eq:Bernoulli_lower_bound_mult}
\end{align}
where (i) is by the inequality $\log(1+\alpha x)\geq \alpha\log(1+x)$ for $0\leq\alpha\leq 1$;
(ii) is because $L\sim\text{Geometric}(p)$;
and (iii) is because $0\leq p\leq1$ and $\mu=p\bar{B}$.\qed

\subsection{General i.i.d Energy Harvesting Processes: Proof of \mbox{Theorem}~\ref{thm:onlinePC}}
\label{subsec:general_energy_arrivals}

We will now use the result of the previous section to lower bound the throughput of the Fixed Fraction Policy for general i.i.d. energy harvesting processes.
We will show that under all distributions of (clipped) energy arrivals $\tilde{E}_t=\min\{E_t,\bar{B}\}$ with mean $\mu=\mathbb{E}[\tilde{E}_t]$, the lowest throughput is obtained when $\tilde{E}_t$ is Bernoulli, taking the values $0$ or $\bar{B}$.

We begin with a few notations and definitions.
Recall that the Fixed Fraction Policy is given by $g_t=qb_t$, where $q=\mu/\bar{B}$.
Under this policy:
\begin{align*}
b_t=\min\{(1-q)b_{t-1}+\tilde{E}_t,\bar{B}\},
&&t=2,3,\ldots,
\end{align*}
where, as in the previous section, we assume $b_1=\bar{B}$.

In what follows, we consider the performance of this policy under different distributions  for the energy arrivals and different initial battery levels.
Therefore, with a slight abuse of notation, we define the expected $n$-horizon throughput for initial battery level $x\in[0,\bar{B}]$ and i.i.d. energy arrivals distributed according to the distribution of $E_1$:
\begin{align*}
\T_n(\mathbf{g},E_1,x)\triangleq
	\frac{1}{n}\sum_{t=1}^{n}\mathbb{E}[\tfrac{1}{2}\log(1+\gamma qb_t)
	|b_1=x].
\end{align*}
Note that the long term average throughput under i.i.d. energy arrivals with distribution $\tilde{E}_1$ is given by $\liminf_{n\to\infty}\T_n(\mathbf{g},\tilde{E}_1,\bar{B})$.

Let $\hat{E}_t$ be i.i.d. Bernoulli RVs, specifically $\hat{E}_t\in\{0,\bar{B}\}$ and $\Pr(\hat{E}_t=\bar{B})=q$. Note that 
\[\mathbb{E}[\hat{E}_t]=\mathbb{E}[\tilde{E}_t]=\mathbb{E}[\min\{E_t,\bar{B}\}]=\mu.\]
In the following proposition, we claim that the $n$-horizon expected throughput for any distribution of i.i.d. energy arrivals is always better than the throughput obtained for i.i.d. Bernoulli energy arrivals with the same mean, for any $n$ and any initial battery level $x$.
\begin{proposition}
\label{prop:bernoulli_is_worst}
For any $x\in[0,\bar{B}]$ and any integer $n\geq1$:
\[
\T_n(\mathbf{g},\tilde{E}_1,x)\geq\T_n(\mathbf{g},\hat{E}_1,x).
\]
\end{proposition}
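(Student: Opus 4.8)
The plan is to prove Proposition~\ref{prop:bernoulli_is_worst} by induction on the horizon $n$, exploiting the fact that the Fixed Fraction Policy is a fixed, state-dependent affine map and that $\frac{1}{2}\log(1+\gamma q x)$ is concave and nondecreasing in the initial battery level $x$. The key structural observation is that the throughput functional $x\mapsto\T_n(\mathbf{g},\tilde E_1,x)$ is itself concave and nondecreasing in $x$ for every fixed distribution of the arrivals; this is what lets a mean-preserving contraction of the arrival distribution toward the two-point set $\{0,\bar B\}$ only decrease the throughput. So the argument has two layers: (a) establish monotonicity and concavity of $\T_n(\mathbf{g},\cdot,x)$ in $x$, and (b) use these to compare an arbitrary $\tilde E_1$ with the extremal Bernoulli $\hat E_1$ having the same mean.

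For step (a), I would argue by induction on $n$. For $n=1$, $\T_1(\mathbf{g},E_1,x)=\frac12\log(1+\gamma q x)$, which is manifestly concave and nondecreasing in $x$ and does not depend on the distribution of $E_1$. For the inductive step, write
\[
\T_{n+1}(\mathbf{g},E_1,x)=\frac{1}{n+1}\Big[\tfrac12\log(1+\gamma q x)+n\,\mathbb{E}\,\T_n\big(\mathbf{g},E_1,\min\{(1-q)x+\tilde E_1,\bar B\}\big)\Big],
\]
where the expectation is over the first arrival $\tilde E_1=\min\{E_1,\bar B\}$ and I have used i.i.d.-ness to recognize the tail as an $n$-horizon throughput started from the new battery level. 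Since $y\mapsto\T_n(\mathbf{g},E_1,y)$ is concave and nondecreasing by the inductive hypothesis, and $x\mapsto\min\{(1-q)x+e,\bar B\}$ is affine-then-clipped (hence concave and nondecreasing) for each fixed $e$, the composition is concave and nondecreasing in $x$; taking expectation over $\tilde E_1$ and adding the concave nondecreasing term $\frac12\log(1+\gamma q x)$ preserves both properties. This closes the induction.

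For step (b), fix $n$ and $x$, and compare $\T_n(\mathbf{g},\tilde E_1,x)$ with $\T_n(\mathbf{g},\hat E_1,x)$. I would again induct on $n$, using the recursion above: the only place the arrival distribution enters (beyond the distribution-independent $\frac12\log(1+\gamma q x)$ term) is through $\mathbb{E}\big[\,\psi_n(\min\{(1-q)x+\tilde E_1,\bar B\})\,\big]$ where $\psi_n(y)\triangleq\T_n(\mathbf{g},\hat E_1,y)$ (after first replacing $\T_n(\mathbf{g},\tilde E_1,\cdot)$ by $\T_n(\mathbf{g},\hat E_1,\cdot)$ via the inductive hypothesis, which is valid since $\psi_n$ is nondecreasing). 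Now $y\mapsto\psi_n(\min\{(1-q)x+y,\bar B\})$ is concave and nondecreasing in $y\in[0,\bar B]$ by step (a). Among all distributions of $\tilde E_1$ supported on $[0,\bar B]$ with mean $\mu=q\bar B$, the two-point distribution on $\{0,\bar B\}$ (with mass $q$ on $\bar B$) is the one that minimizes $\mathbb{E}[\varphi(\tilde E_1)]$ for every concave $\varphi$ — it is the extreme point of this moment set that is ``most spread out,'' and for a concave function the expectation is minimized at the extreme points $0,\bar B$; formally, concavity gives $\varphi(e)\ge \frac{\bar B-e}{\bar B}\varphi(0)+\frac{e}{\bar B}\varphi(\bar B)$ for $e\in[0,\bar B]$, and taking expectations yields $\mathbb{E}[\varphi(\tilde E_1)]\ge\mathbb{E}[\varphi(\hat E_1)]$. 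Applying this with $\varphi(y)=\psi_n(\min\{(1-q)x+y,\bar B\})$ gives exactly $\T_n(\mathbf{g},\tilde E_1,x)\ge\T_n(\mathbf{g},\hat E_1,x)$, completing the induction.

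The main obstacle — and the step to get right — is the clean separation in step (b): one must first invoke the horizon-$n$ inductive hypothesis to swap $\tilde E_1$'s future for $\hat E_1$'s future (legitimate only because the relevant throughput-in-$y$ function is monotone, which is exactly what step (a) supplies), and only then apply the concavity/Jensen-type comparison at the level of the single first arrival. Sloppiness about which distribution governs which stage, or failing to have monotonicity in hand before the swap, is where an attempted proof would break. Everything else — the affine-plus-clip structure of the battery update, and the fact that $\frac12\log(1+\gamma q x)$ is concave and increasing — is routine. Once the proposition is established, combining it with Propositions~\ref{prop:Bernoulli_gap} and~\ref{prop:Bernoulli_mult} (taking $x=\bar B$ and $n\to\infty$) immediately yields Theorem~\ref{thm:onlinePC}.
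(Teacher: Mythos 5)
Your proposal is correct and follows essentially the same route as the paper: induction on the horizon $n$, carrying along concavity and monotonicity of the value function in the battery level, and using the Jensen-type extremality of the two-point distribution on $\{0,\bar{B}\}$ for concave functions (the paper's Lemma~\ref{lemma:bernoulli_is_worst}) to handle the first arrival after swapping the continuation via the inductive hypothesis. The only cosmetic differences are that you establish concavity/monotonicity for arbitrary arrival distributions where the paper only needs it for the Bernoulli one, and that the swap of the continuation requires only the pointwise inductive inequality (monotonicity is instead what makes the clipped composition $y\mapsto\psi_n(\min\{(1-q)x+y,\bar{B}\})$ concave); neither affects correctness.
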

Before proving this proposition, we state the following lemma.
\begin{lemma}
\label{lemma:bernoulli_is_worst}
Let $f(z)$ be concave on the interval $[0,\bar{B}]$, and let $Z$ be a RV confined to the same interval, i.e. $0\leq Z\leq \bar{B}$. 
Let $\hat{Z}\in\{0,\bar{B}\}$ be a Bernoulli RV with $\Pr(\hat{Z}=\bar{B})=\mathbb{E}Z/\bar{B}$.
Then
\[
\mathbb{E}[f(Z)]\geq \mathbb{E}[f(\hat{Z})].
\]
\end{lemma}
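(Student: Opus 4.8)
The plan is to prove Lemma~\ref{lemma:bernoulli_is_worst} directly by a two-point (``extremal'') argument: among all random variables $Z$ supported on $[0,\bar B]$ with a prescribed mean $m\triangleq\mathbb{E}Z$, the one minimizing $\mathbb{E}[f(Z)]$ for a concave $f$ is the two-point distribution on the endpoints $\{0,\bar B\}$. First I would reduce to a pointwise statement: for every $z\in[0,\bar B]$,
\[
f(z)\;\geq\;\Bigl(1-\tfrac{z}{\bar B}\Bigr)f(0)+\tfrac{z}{\bar B}\,f(\bar B),
\]
which is exactly concavity of $f$ on $[0,\bar B]$ applied to the convex combination $z=(1-\tfrac{z}{\bar B})\cdot 0+\tfrac{z}{\bar B}\cdot\bar B$. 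Taking expectations of both sides over $Z$ and using linearity,
\[
\mathbb{E}[f(Z)]\;\geq\;\Bigl(1-\tfrac{\mathbb{E}Z}{\bar B}\Bigr)f(0)+\tfrac{\mathbb{E}Z}{\bar B}\,f(\bar B).
\]
The right-hand side is precisely $\mathbb{E}[f(\hat Z)]$, since $\hat Z$ takes value $\bar B$ with probability $\mathbb{E}Z/\bar B$ and $0$ otherwise. This completes the proof.

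The only mild subtlety is the degenerate case $\bar B=0$, which is excluded in the setting since $\mathbb{E}[E_t]>0$ forces $\bar B>0$ to be meaningful (and if $\bar B=0$ the statement is vacuous as $Z\equiv 0$). One should also note that no integrability assumption beyond $0\le Z\le\bar B$ is needed: $f$ is concave and finite on the compact interval $[0,\bar B]$, hence bounded there, so $\mathbb{E}[f(Z)]$ is automatically well-defined and finite, as is $\mathbb{E}[f(\hat Z)]$.

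There is really no main obstacle here; the lemma is a clean one-line consequence of the defining inequality for concave functions, and the ``hard'' conceptual work is deferred to how the lemma is applied in Proposition~\ref{prop:bernoulli_is_worst} (where one must argue, presumably by a coupling/induction on the horizon $n$, that replacing $\tilde E_t$ by the endpoint-Bernoulli $\hat E_t$ can only decrease the battery trajectory in a way that propagates through the concave stage rewards $\tfrac12\log(1+\gamma q b_t)$). For the lemma in isolation, the plan is simply: state concavity at the point $z$ with weights $1-z/\bar B$ and $z/\bar B$; integrate; identify the resulting affine expression in $\mathbb{E}Z$ with $\mathbb{E}[f(\hat Z)]$.
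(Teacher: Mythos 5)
Your proposal is correct and follows essentially the same route as the paper: apply the concavity inequality $f(z)\geq\frac{z}{\bar{B}}f(\bar{B})+\frac{\bar{B}-z}{\bar{B}}f(0)$ pointwise, take expectations, and identify the right-hand side as $\mathbb{E}[f(\hat{Z})]$. The additional remarks on the degenerate case $\bar{B}=0$ and integrability are fine but not needed.
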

\begin{proof}
By concavity, for any $z\in[0,\bar{B}]$:
\[
f(z)\geq\frac{z}{\bar{B}}f(\bar{B})+\frac{\bar{B}-z}{\bar{B}}f(0).
\]
Setting $z=Z$ and taking expectation yields
\begin{align*}
\mathbb{E}[f(Z)]&\geq\frac{\mathbb{E}Z}{\bar{B}}f(\bar{B})
+\left(1-\frac{\mathbb{E}Z}{\bar{B}}\right)f(0)\\*
&=\mathbb{E}[f(\hat{Z})].
\tag*{\qedhere} 
\end{align*}
\end{proof}

\begin{proof}[Proof of Proposition~\ref{prop:bernoulli_is_worst}]
We will give a proof by induction.
Clearly for $n=1$ we have 
\[
\T_1(\mathbf{g},\tilde{E}_1,x)=\T_1(\mathbf{g},\hat{E}_1,x)
=\tfrac{1}{2}\log(1+\gamma qx).
\]
Observe that this is a non-decreasing concave function of $x$.
This will in fact be true for every $\T_n(\mathbf{g},\hat{E}_1,x)$, $n\geq 1$, and we will use this in the induction step.

Assume that $\T_{n-1}(\mathbf{g},\tilde{E}_1,x)\geq\T_{n-1}(\mathbf{g},\hat{E}_1,x)$ for all $x\in[0,\bar{B}]$, and also that $\T_{n-1}(\mathbf{g},\hat{E}_1,x)$ is monotonic non-decreasing and concave in $x$.

For the induction step, observe that:
\begin{align*}
n\T_n(\mathbf{g},\tilde{E}_1,x)
&=\tfrac{1}{2}\log(1+\gamma qx)
	+(n-1)\mathbb{E}[\T_{n-1}(\mathbf{g},\tilde{E}_1,b_2)],
\end{align*}
where the expectation is over the RV 
$b_2=\min\{(1-q)x+\tilde{E}_2,\bar{B}\}$.
This is due to the process $b_t$ being a time-homogeneous Markov chain.
By the induction hypothesis, we have:
\begin{equation}
n\T_n(\mathbf{g},\tilde{E}_1,x)
\geq\tfrac{1}{2}\log(1+\gamma qx)
	+(n-1)\mathbb{E}[\T_{n-1}(\mathbf{g},\hat{E}_1,b_2)],
\label{eq:induction_step}
\end{equation}
where still $b_2=\min\{(1-q)x+\tilde{E}_2,\bar{B}\}$.
Now,
\begin{align*}
\hspace{1ex}\lefteqn{\hspace{-1ex}\T_{n-1}(\mathbf{g},\hat{E}_1,b_2)}\nonumber\\*
&=\T_{n-1}(\mathbf{g},\hat{E}_1,\min\{(1-q)x+\tilde{E}_2,\bar{B}\})\\
&=\min\big\{\T_{n-1}(\mathbf{g},\hat{E}_1,(1-q)x+\tilde{E}_2),\ 
	\T_{n-1}(\mathbf{g},\hat{E}_1,\bar{B})\big\},
\end{align*}
where the second equality is because $\T_{n-1}(\mathbf{g},\hat{E}_1,{}\cdot{})$ is non-decreasing, due to the induction hypothesis.
Next, we claim that the function $f_1(z)\triangleq\T_{n-1}(\mathbf{g},\hat{E}_1,(1-q)x+z)$ is concave. This is true again by the induction hypothesis that $\T_{n-1}(\mathbf{g},\hat{E}_1,{}\cdot{})$ is concave.
Therefore, since $\T_{n-1}(\mathbf{g},\hat{E}_1,\bar{B})$ is simply a constant, the function $f_2(z)\triangleq\T_{n-1}(\mathbf{g},\hat{E}_1,\min\{(1-q)x+z,\bar{B}\})$ is a minimum of two concave functions, hence it is itself concave.
We can now apply Lemma~\ref{lemma:bernoulli_is_worst} to obtain:
\begin{align*}
\mathbb{E}[\T_{n-1}(\mathbf{g},\hat{E}_1,b_2)]
&=\mathbb{E}[f_2(\tilde{E}_2)]\\
&\geq\mathbb{E}[f_2(\hat{E}_2)]\\
&= \mathbb{E}[\T_{n-1}(\mathbf{g},\hat{E}_1,\hat{b}_2)],
\end{align*}
where $\hat{b}_2\triangleq\min\{(1-q)x+\hat{E}_2,\bar{B}\}$.
Substituting this into~\eqref{eq:induction_step}:
\begin{align*}
n\T_{n}(\mathbf{g},\tilde{E}_1,x)
&\geq\tfrac{1}{2}\log(1+\gamma qx)
	+(n-1)\mathbb{E}[\T_{n-1}(\mathbf{g},\hat{E}_1,\hat{b}_2)]\\*
&=n\T_{n}(\mathbf{g},\hat{E}_1,x).
\end{align*}
It is left to verify that $\T_{n}(\mathbf{g},\hat{E}_1,x)$ is concave and non-decreasing in $x$.
Writing it explicitly:
\begin{align*}
n\T_{n}(\mathbf{g},\hat{E}_1,x)
&=\tfrac{1}{2}\log(1+\gamma qx)
+q(n-1)\T_{n-1}(\mathbf{g},\hat{E}_1,\bar{B})\\*
&\quad{}+(1-q)(n-1)\T_{n-1}(\mathbf{g},\hat{E}_1,(1-q)x),
\end{align*}
we see that it is a sum of non-decreasing concave functions of $x$, hence it is a non-decreasing concave function of $x$.
\end{proof}

As an immediate result of Proposition~\ref{prop:bernoulli_is_worst}, we obtain
\begin{equation}
\liminf_{n\to\infty}\T_n(\mathbf{g},\tilde{E}_1,\bar{B})
\geq\liminf_{n\to\infty}\T_n(\mathbf{g},\hat{E}_1,\bar{B}).
\label{eq:bernoulli_is_worst}
\end{equation}
Now we can apply the results of the previous section.
From \eqref{eq:Bernoulli_lower_bound} we have
\begin{align*}
\liminf_{n\to\infty}\T_n(\mathbf{g},\hat{E}_1,\bar{B})
&\geq\frac{1}{2}\log(1+\gamma q\bar{B})-\frac{1}{\ln2}\\*
&=\frac{1}{2}\log(1+\gamma\mu)-0.72,
\end{align*}
and from \eqref{eq:Bernoulli_lower_bound_mult} we have
\begin{align*}
\liminf_{n\to\infty}\T_n(\mathbf{g},\hat{E}_1,\bar{B})
&\geq\frac{1}{2}\cdot\frac{1}{2}\log(1+\gamma q\bar{B})\\
&=\frac{1}{2}\cdot\frac{1}{2}\log(1+\gamma\mu).
\end{align*}
Substituting in~\eqref{eq:bernoulli_is_worst} completes the proof of Theorem~\ref{thm:onlinePC}.

\section{Conclusion}

We  proposed a simple online power control policy for energy harvesting channels and proved that it is within constant additive and multiplicative gaps to the AWGN capacity for any i.i.d. harvesting process and any battery size. This allowed us to develop a simple and insightful approximation for the optimal throughput. While optimal power control in the offline case and the online case with infinite battery size have been characterized in the previous literature, the strategies developed for the online case have been mostly heuristic with no or only asymptotic guarantees on optimality. We believe the approximation approach we propose in this paper can be fruitful in developing further insights on online power control under various assumptions, such as processing cost and battery non-idealities as well as multi-user settings, the rigorous treatment of which have been so far mostly limited to either the offline case or the case with infinite battery. For example, in~\cite{HuseyinWiOpt} this approach is extended to derive a universally near-optimal power control policy for the multiple-access channel. It is shown that in an energy-harvesting MAC the users can achieve a symmetric capacity equal to the AWGN capacity as $K\to\infty$.
Additionally, an approximation of the optimal throughput can be used to derive bounds on the information-theoretic capacity of the energy harvesting channel, as done in~\cite{ShavivNguyenOzgur2015}.

An important step in our proof was to show that i.i.d. Bernoulli energy arrivals constitute the worst case for our proposed policy among all i.i.d. energy arrival processes with the same mean, i.e. the throughput achieved by our proposed policy is smallest when the process is Bernoulli. Whether i.i.d. Bernoulli energy arrivals are also the worst case in terms of the optimal throughput is an interesting question.


\bibliographystyle{IEEEtran}
\bibliography{IEEEabrv,energy_harvesting}

\appendices

\section{Upper Bound on the Optimal Throughput}
\label{sec:upper_bound}

\begin{proof}[Proof of Proposition~\ref{prop:upperbound}]
Recall $\tilde{E}_t=\min\{E_t,\bar{B}\}$ and $\mu=\mathbb{E}[\tilde{E}_t]$.
For any $n$ and any policy $\mathbf{g}$, we have:
\begin{align*}
\T_n(\mathbf{g})
&=
	\frac{1}{n}\sum_{t=1}^n \mathbb{E}
	\left[\frac{1}{2}\log
	\big(1+\gamma g_t(\tilde{E}^n)\big)\right]\\
&\overset{\text{(i)}}{\leq}
	\frac{1}{2}\log\left(1+\frac{\gamma}{n}
	\mathbb{E}\big[\sum_{t=1}^{n} 
	g_t(\tilde{E}^n)\big]\right)\\
&\overset{\text{(ii)}}{\leq}
	\frac{1}{2}\log\left(1+ \frac{\gamma}{n}\mathbb{E}
	\big[\bar{B}+\sum_{t=2}^{n} \tilde{E}_t\big]
	\right)\\
&=\frac{1}{2}\log\left(1+ \frac{\gamma}{n}\bar{B}+\gamma\tfrac{n-1}{n}\mu\right)
\end{align*}
where (i) is by concavity of $\log$; (ii) follows from the fact that the total allocated energy up to time $n$, can not exceed the total energy that arrives up to time $n$ plus the energy initially available in the battery, which cannot be more than $\bar{B}$:
\[
\sum_{t=1}^{n} g_t\leq \bar{B}+\sum_{t=2}^{n} \tilde{E}_t.
\]
The last term tends to $\frac{1}{2}\log(1+\gamma\mu)$ as $n\to\infty$. Note that this is true for any energy arrival process $E_t$. We therefore have:
\begin{equation*}
\Conline\leq\frac{1}{2}\log(1+\gamma\mu),
\end{equation*}
where $\mu=\mathbb{E}[\min\{E_t,\bar{B}\}]$.
\end{proof}

\section{Throughput Does Not Depend on Initial Battery State}
\label{sec:initial_battery_state}

We state and prove the following proposition:
\begin{proposition}
Let $\mathbf{g}$ be a policy which achieves throughput
$
\liminf_{n\to\infty}\T_n(\mathbf{g})
$
when the initial battery level is $b_1=\bar{B}$.
Then for every $\epsilon>0$ there exists a policy $\tilde{\mathbf{g}}$ which achieves throughput
\[
\liminf_{n\to\infty}\T_n(\tilde{\mathbf{g}})
\geq \liminf_{n\to\infty}\T_n(\mathbf{g})-\epsilon,
\]
when the initial battery level is $b_1=0$.
\end{proposition}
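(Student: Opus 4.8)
The plan is to formalize the intuition from the main text: when the battery starts empty, first \emph{wait} (allocating zero power) until it is full for the first time, and only then begin to emulate $\mathbf{g}$ with its clock restarted. Define the stopping time $T\triangleq\inf\{t\geq1:\sum_{s=2}^{t}E_s\geq\bar{B}\}$, the first instant at which the "do nothing" battery trajectory started from $b_1=0$ reaches $\bar{B}$. Since $\mathbb{E}[E_t]>0$ forces $\Pr(E_t>0)>0$, the partial sums $\sum_{s=2}^{t}E_s$ diverge almost surely, so $T<\infty$ a.s.; moreover $T$ is a stopping time for the natural filtration of $(E_t)$, with $\{T=m\}$ determined by $E^m$. (Here and below $(x)^+\triangleq\max\{x,0\}$.)

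Next I would define $\tilde{\mathbf{g}}$ explicitly: $\tilde{g}_t(e^t)=0$ whenever $\sum_{s=2}^{t}e_s<\bar{B}$ (i.e.\ for $t\leq T$), and $\tilde{g}_{T+k}(E^{T+k})=g_k(E_{T+1},\ldots,E_{T+k})$ for $k\geq1$; this is a legitimate deterministic causal map of $E^t$ since both $T$ and $E_{T+1}^{t}$ are functions of $E^t$. The key bookkeeping step is a one-line induction on $k$, using \eqref{eq:battery} and $\tilde{g}_T=0$, showing that the battery level of the $\tilde{\mathbf{g}}$-system satisfies $\tilde{b}_{T+1}=\bar{B}$ and, more generally, $\tilde{b}_{T+k}=b'_k$ for all $k\geq1$, where $b'_k$ is the battery level of an auxiliary run of $\mathbf{g}$ started from $b'_1=\bar{B}$ and fed the arrivals $(E_{T+j})_{j\geq1}$. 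Admissibility of $\tilde{\mathbf{g}}$ then follows from that of $\mathbf{g}$: $\tilde{g}_{T+k}=g_k\leq b'_k=\tilde{b}_{T+k}$, and trivially $\tilde{g}_t=0\leq\tilde{b}_t$ for $t\leq T$.

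For the throughput estimate, set $C\triangleq\frac{1}{2}\log(1+\gamma\bar{B})$ and $R_k\triangleq\frac{1}{2}\log(1+\gamma g_k(E^k))$, so $0\leq R_k\leq C$ (as $g_k\leq b_k\leq\bar{B}$ under $\mathbf{g}$) and $\T_n(\mathbf{g})=\frac1n\mathbb{E}[\sum_{k=1}^{n}R_k]$. By the previous paragraph the reward of $\tilde{\mathbf{g}}$ at time $T+k$ is $\tilde{R}_{T+k}\triangleq\frac{1}{2}\log(1+\gamma g_k(E_{T+1}^{T+k}))$ and the first $T$ rewards vanish, so $\T_n(\tilde{\mathbf{g}})=\frac1n\mathbb{E}[\sum_{k=1}^{(n-T)^+}\tilde{R}_{T+k}]$. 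Conditioning on $\{T=m\}$ and using that $(E_{m+j})_{j\geq1}$ is i.i.d.\ with the original law and independent of $E^m$ (the fresh-start property of i.i.d.\ sequences at a stopping time), the conditional law of $(\tilde{R}_{T+k})_{k\geq1}$ given $T=m$ is that of $(R_k)_{k\geq1}$; hence, with $\psi(\tau)\triangleq\mathbb{E}[\sum_{k=1}^{(n-\tau)^+}R_k]$, we get $\T_n(\tilde{\mathbf{g}})=\frac1n\mathbb{E}[\psi(T)]$. Since $\sum_{k=1}^{n}R_k-\sum_{k=1}^{(n-\tau)^+}R_k$ contains at most $\min\{\tau,n\}$ terms, each at most $C$, we have $\psi(\tau)\geq\mathbb{E}[\sum_{k=1}^{n}R_k]-C\min\{\tau,n\}$, and therefore
\[
\T_n(\tilde{\mathbf{g}})\;\geq\;\T_n(\mathbf{g})-\frac{C}{n}\,\mathbb{E}\!\left[\min\{T,n\}\right].
\]
Because $T<\infty$ a.s., $\min\{T,n\}/n\to0$ a.s.\ and is bounded by $1$, so bounded convergence gives $\frac{C}{n}\mathbb{E}[\min\{T,n\}]\to0$; taking $\liminf_{n\to\infty}$ yields $\liminf_n\T_n(\tilde{\mathbf{g}})\geq\liminf_n\T_n(\mathbf{g})$, which is in fact stronger than the stated bound (one may take $\epsilon=0$). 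The main obstacle is not any single inequality but the careful handling of the random restart: verifying that $\tilde{\mathbf{g}}$ stays a causal admissible online policy (the battery-matching induction) and justifying that the post-$T$ reward sequence is an independent copy of $\mathbf{g}$'s reward sequence; the remainder is the bounded-rewards truncation above.
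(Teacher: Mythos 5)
Your proof is correct, but it takes a genuinely different route from the paper's. The paper waits a \emph{fixed, deterministic} number of slots $\ell$, then checks whether the battery happens to be full; if it is, it runs $\mathbf{g}$ shifted by $\ell$, and if not it aborts (transmits zero forever). Because the event $\{b_\ell=\bar B\}$ is $\sigma(E^\ell)$-measurable and the arrivals are i.i.d., the throughput factors exactly as $\Pr\{b_\ell=\bar B\}\cdot\frac{n-\ell}{n}\T_{n-\ell}(\mathbf{g})$, and the law of large numbers gives $\Pr\{b_\ell=\bar B\}\geq 1-\epsilon_\ell$ with $\epsilon_\ell\to0$; this is why the proposition is stated with an $\epsilon$ loss. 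You instead wait until the \emph{random} first time $T$ at which the battery fills, restart $\mathbf{g}$ at $T+1$ using the fresh-start property of i.i.d.\ sequences at a stopping time, and control the loss by the truncation bound $\T_n(\tilde{\mathbf{g}})\geq\T_n(\mathbf{g})-\frac{C}{n}\mathbb{E}[\min\{T,n\}]$ together with bounded convergence. The price is that you must verify causality and admissibility of a randomly time-shifted policy (your battery-matching induction) and invoke the strong Markov property; the payoff is a strictly stronger conclusion, namely $\liminf_n\T_n(\tilde{\mathbf{g}})\geq\liminf_n\T_n(\mathbf{g})$ with no $\epsilon$, and you need only $T<\infty$ a.s.\ rather than the LLN. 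All the individual steps check out: $R_k\leq\frac12\log(1+\gamma\bar B)$ since $g_k\leq b_k\leq\bar B$, the conditional law of the post-$T$ rewards given $\{T=m\}$ is indeed that of $(R_k)_{k\geq1}$ because $\{T=m\}\in\sigma(E^m)$, and $\frac1n\mathbb{E}[\min\{T,n\}]\to0$ by dominated convergence. The only blemish is cosmetic: the condition $\sum_{s=2}^{t}e_s<\bar B$ characterizes $t<T$ rather than $t\leq T$, so your verbal definition of the waiting phase is off by one slot from the formulas you actually use ($\tilde g_T=0$, emulation starting at $T+1$); either convention works since $\tilde b_T=\bar B$ already.
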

Since a policy which is admissible for $b_1=0$ is also admissible for any $b_1\in[0,\bar{B}]$,
this implies that we can compute the throughput $\Conline$ for any initial battery level, say $\bar{B}$, regardless of the actual battery level of interest $b_1$.
\begin{proof}
Fix $\ell\geq1$. Consider the following online power control policy $\tilde{\mathbf{g}}$ for initial battery level $b_1=0$: Transmit zeros ($g_t=0$) for the first $\ell$ time slots. This will allow the battery to completely recharge to $\bar{B}$ with high probability. Then, if $b_{\ell}=\bar{B}$, transmit the policy $\mathbf{g}$ (note that $b_\ell=\bar{B}$ and $g_\ell=0$ imply $b_{\ell+1}=\bar{B}$). Otherwise, transmit zeros (i.e. give up on the entire transmission).
More precisely, define the new policy as follows, for $t=1,2,\ldots$:
\[
\tilde{g}_t(E^t)=\begin{cases}
0&,1\leq t\leq \ell\\
1_{\{b_{\ell}=\bar{B}\}}\cdot {g}_{t-\ell}(E_{\ell+1}^{t})
&,\ell+1\leq t
\end{cases}
\]
where $1_{\{\cdot\}}$ is the indicator function.
Observe that $b_{\ell}$ is a deterministic function of $E^{\ell}$, which is given by
$b_{\ell}=\min\big\{\sum_{t=2}^{\ell}E_t,\bar{B}\big\}$.
We have for any $n>\ell$:
\begin{align}
\T_n(\tilde{\mathbf{g}})
&=\frac{1}{n}\sum_{t=1}^{n}\mathbb{E}\left[\frac{1}{2}\log\big(1+\tilde{g}_t(E^t)\big)\right]\nonumber\\
&=\frac{1}{n}\sum_{t=\ell+1}^{n}\mathbb{E}\left[\frac{1}{2}\log\big(1+1_{\{b_{\ell}=\bar{B}\}}\cdot{g}_{t-\ell}(E_{\ell+1}^t)\big)\right]\nonumber\\
&=\frac{1}{n}\sum_{t=1}^{n-\ell}\mathbb{E}\left[1_{\{b_{\ell}=\bar{B}\}}\cdot\frac{1}{2}\log\big(1+{g}_t(E_{\ell+1}^{\ell+t})\big)\right]\nonumber\\
&\overset{\text{(i)}}{=}\frac{1}{n}\sum_{t=1}^{n-\ell}\Pr\{b_{\ell}=\bar{B}\}\cdot\mathbb{E}\left[\frac{1}{2}\log(1+{g}_t(E_{\ell+1}^{\ell+t})\big)\right]\nonumber\\
&\overset{\text{(ii)}}{=}\Pr\{b_\ell=\bar{B}\}\cdot\frac{1}{n}\sum_{t=1}^{n-\ell}\mathbb{E}\left[\frac{1}{2}\log(1+{g}_t(E^{t})\big)\right]\nonumber\\
&=\Pr\{b_\ell=\bar{B}\}\cdot\frac{n-\ell}{n}\T_{n-\ell}(\mathbf{g}),
\label{eq:T0_geq_plTBbar}
\end{align}
where (i) is because $b_\ell$ depends only on $E^\ell$, and $E_t$ is i.i.d.;
and (ii) is because $E_t$ is i.i.d.

Since $\mathbb{E}[E_t]>0$, we can lower-bound the probability of recharging the battery using the law of large numbers:
\begin{align*}
\Pr\{b_\ell=\bar{B}\}
&=1-\Pr\big\{\sum_{t=2}^{\ell}E_t<\bar{B}\big\}\\*
&\geq 1 - \epsilon_\ell,
\end{align*}
where $\epsilon_\ell\to0$ as $\ell\to\infty$.
Substituting this in~\eqref{eq:T0_geq_plTBbar} and taking $n\to\infty$ yields
\[
\liminf_{n\to\infty}\T_n(\tilde{\mathbf{g}})
\geq(1-\epsilon_\ell)\liminf_{n\to\infty}\T_n(\mathbf{g}).
\]
By choosing $\ell$ large enough, we can approach the throughput of $\mathbf{g}$ arbitrarily close for any initial battery level.
\end{proof}

\section{Optimal Throughput for Bernoulli Energy Arrivals: Proof of Theorem~\ref{thm:Bernoulli}}
\label{sec:kkt_solution}

Recall Proposition~\ref{prop:Bellman} and the preceding discussion.
It can be argued~\cite[Theorem 6.4]{arapostathis1993discrete} that the there exists an stationary policy, i.e. there exists a function $g^\star(b)$, satisfying $0\leq g^\star(b)\leq b$ for $0\leq b\leq\bar{B}$, s.t. the optimal throughput is given by
\[
\Conline=\liminf_{n\to\infty}\frac{1}{n}\sum_{t=1}^{n}\mathbb{E}[\tfrac{1}{2}
	\log(1+\gamma g^\star(b_t))].
\]
Under such a stationary policy, the battery state $b_t$ is a regenerative process (see Definition~\ref{def:regenerative}). The regeneration times $\{T(n)\}_{n=0}^{\infty}$ are the energy arrivals, i.e. $E_{T(n)}=\bar{B}$, which implies $b_{T(n)}=\bar{B}$.
Additionally, we assume $b_1=\bar{B}$ (See Appendix~\ref{sec:initial_battery_state}),
which implies the process is non-delayed (i.e. the first regeneration time is $T(0)=1$).
Applying Lemma~\ref{lemma:SLLN_regenerative} in Section~\ref{subsec:Bernoulli}, we obtain:
\begin{align*}
\Conline&=\frac{1}{\mathbb{E}L}\mathbb{E}\left[\sum_{t=1}^{L}
	\tfrac{1}{2}\log(1+\gamma g^\star(b_t))\right],
\end{align*}
where $L=T(1)-T(0)$ is a $\text{Geometric}(p)$ RV, which follows from the fact that $E_t$ are i.i.d. $\text{Bernoulli}(p)$. 
Observe that for $2\leq t\leq L$ there are no energy arrivals. Hence, we have the following deterministic recursive relation:
\begin{equation}
\begin{aligned}
b_1&=\bar{B},&&\\
b_t&=b_{t-1}-g^\star(b_{t-1})&&,t=2,\ldots,L.
\end{aligned}
\label{eq:bt_recursive_relation}
\end{equation}
Since $L$ can take any positive integer, this defines a sequence $\{\mathcal{E}^\star_i\}_{i=1}^{\infty}$ such that $g^\star(b_i)=\mathcal{E}^\star_i$.
We can therefore write
\begin{align*}
\Conline&=\frac{1}{\mathbb{E}L}\mathbb{E}\left[
	\sum_{i=1}^{L}\tfrac{1}{2}\log(1+
	\gamma\mathcal{E}^\star_i)\right]\\
&=p\sum_{k=1}^{\infty}p(1-p)^{k-1}\sum_{i=1}^{k}\tfrac{1}{2}
	\log(1+\gamma\mathcal{E}^\star_i)\\
&=\sum_{i=1}^{\infty}p(1-p)^{i-1}\tfrac{1}{2}\log(1+\gamma\mathcal{E}^\star_i).
\end{align*}
Moreover, by the constraint $g^\star(b_t)\leq b_t$ and the recursive relation~\eqref{eq:bt_recursive_relation}, we must have $\sum_{i=1}^{\infty}\mathcal{E}^\star_i\leq\bar{B}$, in addition to ${\mathcal{E}^\star_i\geq0}$ for all $i\geq1$.

To find $\{\mathcal{E}_i^\star\}_{i=1}^{\infty}$ we need to solve the following infinite-dimensional optimization problem:
\begin{equation}
\begin{aligned}
	\text{maximize}
		&\qquad\sum_{i=1}^{\infty}p(1-p)^{i-1}\tfrac{1}{2}\log(1+\gamma\mathcal{E}_i)\\
	\text{subject to}&\qquad
		\mathcal{E}_i\geq0,\quad i=1,2,\ldots,\\
		&\qquad\sum_{i=1}^{\infty}\mathcal{E}_i\leq \bar{B}.
\end{aligned}
\label{eq:def_Cinf}
\end{equation}
Let $\{\mathcal{E}_i^\star\}_{i=1}^{\infty}$ and $\Theta$ be the optimal sequence and optimal objective, respectively, of~\eqref{eq:def_Cinf}.
We will show that~\eqref{eq:def_Cinf} can be solved by the limit as $N\to\infty$ of the following $N$-dimensional optimization problem:
\begin{equation}
\begin{aligned}
	\text{maximize}
		&\qquad\sum_{i=1}^{N}p(1-p)^{i-1}\tfrac{1}{2}\log(1+\gamma\mathcal{E}_i)\\
	\text{subject to}&\qquad
		\mathcal{E}_i\geq0,\quad i=1,2,\ldots,N,\\
		&\qquad\sum_{i=1}^{N}\mathcal{E}_i\leq \bar{B}.
\end{aligned}
\label{eq:def_Cbar}
\end{equation}
Denote by $\Theta_N$ the optimal objective of~\eqref{eq:def_Cbar}.
Clearly $\Theta_N$ is non-decreasing and $\Theta_N\leq\Conline$.
Observe that the first $N$ values of the infinite-dimensional solution, $\{\mathcal{E}_i^\star\}_{i=1}^{N}$, are a feasible solution for~\eqref{eq:def_Cbar}. Therefore,
\begin{align*}
\Theta_N&\geq \sum_{i=1}^{N}p(1-p)^{i-1}\tfrac{1}{2}\log(1+\gamma\mathcal{E}_i^\star)\\
&=\Conline-\sum_{i=N+1}^{\infty}p(1-p)^{i-1}\tfrac{1}{2}
	\log(1+\gamma\mathcal{E}_i^\star)\\
&\overset{(\ast)}{\geq}\Conline
	-\sum_{i=N+1}^{\infty}p(1-p)^{i-1}\tfrac{1}{2}
	\log(1+\gamma\bar{B})\\
&=\Conline-(1-p)^{N}\tfrac{1}{2}\log(1+\gamma\bar{B}),
\end{align*}
where $(\ast)$ is because $\mathcal{E}_t^\star\leq\bar{B}$.
Along with the inequality $\Theta_N\leq\Theta$, this implies
\[
\Theta=\lim_{N\to\infty}\Theta_N.
\]

We continue with the explicit solution of~\eqref{eq:def_Cbar}.
Writing the problem in standard form and using KKT conditions, we have for $i=1,\ldots,N$:
\begin{equation}
	p(1-p)^{i-1}\frac{1}{2}\frac{\gamma}{1+\gamma\mathcal{E}_i}+\lambda_i-\tilde{\lambda}=0,
\label{eq:KKT_equation}
\end{equation}
with $\lambda_i,\tilde{\lambda}\geq0$ and the complementary slackness conditions:
$\lambda_i\mathcal{E}_i=0$ and $\tilde{\lambda}(\sum_{i=1}^{N}\mathcal{E}_i-\bar{B})=0$.

To obtain the non-zero values of $\mathcal{E}_i$, we set $\lambda_i=0$ in~\eqref{eq:KKT_equation}:
\begin{equation}\label{eq:Esolution}
	\mathcal{E}_i=\frac{p(1-p)^{i-1}}{2\tilde{\lambda}}-
	\frac{1}{\gamma}.
\end{equation}
Since $\mathcal{E}_i\geq0$, this implies $\tilde{\lambda}\leq\frac{\gamma}{2}p(1-p)^{i-1}$ for all $i$ for which $\mathcal{E}_i>0$.
Since the RHS is a decreasing function of $i$, there exists an integer $\tilde{N}$ such that $\mathcal{E}_i>0$ for if $i\leq\tilde{N}$ and $\mathcal{E}_i=0$ otherwise.
Therefore $\tilde{N}$ is the largest integer satisfying $\tilde{N}\leq N$ and
\begin{equation}
\tilde{\lambda}\leq\frac{\gamma}{2}p(1-p)^{\tilde{N}-1}.
\label{eq:Ntilde_bound}
\end{equation}


Next, consider the total energy constraint $\sum_{i=1}^{N}\mathcal{E}_i\leq\bar{B}$.
Since increasing $\mathcal{E}_i$ for any $i$ will only increase the objective, this constraint must hold with equality:
\begin{align*}
	\bar{B}&=\sum_{i=1}^{N}\mathcal{E}_i\\
	&=\sum_{i=1}^{\tilde{N}}
		\left(\frac{p(1-p)^{i-1}}
		{2\tilde{\lambda}}-\frac{1}{\gamma}\right)\\
	&=\frac{1-(1-p)^{\tilde{N}}}{2\tilde{\lambda}}
		-\frac{\tilde{N}}{\gamma}.
\end{align*}
This yields:
\begin{equation}\label{eq:lambda_tilde}
	\tilde{\lambda}=\frac{1-(1-p)^{\tilde{N}}}
	{2(\bar{B}+\tilde{N}/\gamma)}.
\end{equation}

Along with~\eqref{eq:Ntilde_bound}, we deduce that $\tilde{N}$ is the largest integer satisfying $\tilde{N}\leq N$ and
\[
	\frac{1-(1-p)^{\tilde{N}}}{2(\bar{B}+\tilde{N}/\gamma)}
	\leq \frac{\gamma}{2}p(1-p)^{\tilde{N}},
\]
or equivalently
\[
	1 \leq (1-p)^{\tilde{N}}[1+p(\gamma\bar{B}+\tilde{N})].
\]
Observe that for $N$ large enough, the optimal $\tilde{N}$ does not depend on $N$, and therefore $\Theta_N$ will not depend on $N$.
For such values of $N$, the optimal sequence $\{\mathcal{E}_i\}_{i=1}^{N}$ for~\eqref{eq:def_Cbar} is given by substituting~\eqref{eq:lambda_tilde} in~\eqref{eq:Esolution} for $i=1,\ldots,\tilde{N}$, and $\mathcal{E}_i=0$ for $i>\tilde{N}$. 
Since $\Theta_N$ does not depend on $N$ and $\Theta=\lim_{N\to\infty}\Theta_N$,
the optimal sequence for~\eqref{eq:def_Cinf} is the same.
This gives~\eqref{eq:Bernoulli_gt_solution}.

\end{document}